\documentclass[11pt]{article}
\usepackage{hyperref}
\usepackage{times}  
\usepackage{mathpazo}
\usepackage{amssymb,amsmath,amsthm}
\usepackage{epsfig}

 \setlength{\topmargin}{-0.5in}
 \setlength{\textwidth}{6.5in} 
 \setlength{\textheight}{9.0in}
 \setlength{\evensidemargin}{-.1in}
 \setlength{\oddsidemargin}{-.1in}

\newtheorem{theorem}{Theorem}[section]
\newtheorem{proposition}[theorem]{Proposition}
\newtheorem{definition}[theorem]{Definition}

\newtheorem{lemma}[theorem]{Lemma}
\newtheorem{conjecture}[theorem]{Conjecture}

\newtheorem{corollary}[theorem]{Corollary}
\newtheorem{fact}[theorem]{Fact}

\newtheorem{remark}[theorem]{Remark}

\newcommand{\qedsymb}{\hfill{\rule{2mm}{2mm}}}
\renewenvironment{proof}[1][]{\begin{trivlist}
\item[\hspace{\labelsep}{\bf\noindent Proof#1:\/}] }{\qedsymb\end{trivlist}}


\def\calG{{\cal G}}

\def\calW{{\cal W}}

\def\Z{{\mathbb{Z}}}
\def\R{\mathbb{R}}

\def\mod{\mbox{mod}}

\newcommand\Kneser[3]{K^<(#1,#2,#3)}

\newcommand{\NP}{\mathsf{NP}}

\newcommand{\od}{\overline{\xi}}

\newcommand{\eps}{\epsilon}
\renewcommand{\epsilon}{\varepsilon}

\newcommand{\rank}{\mathop{\mathrm{rank}}}
\newcommand{\minrank}{\mathop{\mathrm{minrk}}}

\newcommand{\linspan}{\mathop{\mathrm{span}}}
\newcommand{\Fset}{\mathbb{F}}         


\newcommand{\vchrom}{{\chi_\mathrm{v}}}

\begin{document}

\title{{\bf The (Generalized) Orthogonality Dimension of (Generalized) Kneser Graphs: Bounds and Applications}}

\author{
Alexander Golovnev\thanks{Georgetown University, Washington, DC, USA. Partially supported by a Rabin Postdoctoral Fellowship.}
\and
Ishay Haviv\thanks{School of Computer Science, The Academic College of Tel Aviv-Yaffo, Tel Aviv 61083, Israel.
Research supported in part by the Israel Science Foundation (grant No. 1218/20).
}
}

\date{}

\maketitle

\begin{abstract}
The {\em orthogonality dimension} of a graph $G=(V,E)$ over a field $\Fset$ is the smallest integer $t$ for which there exists an assignment of a vector $u_v \in \Fset^t$ with $\langle u_v,u_v \rangle \neq 0$ to every vertex $v \in V$, such that $\langle u_v, u_{v'} \rangle = 0$ whenever $v$ and $v'$ are adjacent vertices in $G$.
The study of the orthogonality dimension of graphs is motivated by various applications in information theory and in theoretical computer science.
The contribution of the present work is two-fold.

First, we prove that there exists a constant $c$ such that for every sufficiently large integer $t$, it is $\NP$-hard to decide whether the orthogonality dimension of an input graph over $\R$ is at most $t$ or at least $3t/2-c$.
At the heart of the proof lies a geometric result, which might be of independent interest, on a {\em generalization} of the orthogonality dimension parameter for the family of {\em Kneser graphs}, analogously to a long-standing conjecture of Stahl (J.~Comb.~Theo.~Ser.~B,~1976).

Second, we study the smallest possible orthogonality dimension over finite fields of the complement of graphs that do not contain certain fixed subgraphs.
In particular, we provide an explicit construction of triangle-free $n$-vertex graphs whose complement has orthogonality dimension over the binary field at most $n^{1-\delta}$ for some constant $\delta >0$.
Our results involve constructions from the family of {\em generalized Kneser graphs} and they are motivated by the rigidity approach to circuit lower bounds.
We use them to answer a couple of questions raised by Codenotti, Pudl{\'{a}}k, and Resta (Theor.~Comput.~Sci.,~2000), and in particular, to disprove their Odd Alternating Cycle Conjecture over every finite field.
\end{abstract}

\section{Introduction}

A $t$-dimensional {\em orthogonal representation} of a graph $G=(V,E)$ over a field $\Fset$ is an assignment of a vector $u_v \in \Fset^t$ with $\langle u_v,u_v \rangle \neq 0$ to every vertex $v \in V$, such that $\langle u_v, u_{v'} \rangle = 0$ whenever $v$ and $v'$ are adjacent vertices in $G$. The {\em orthogonality dimension} of a graph $G$ over $\Fset$, denoted by $\od(G, \Fset)$, is the smallest integer $t$ for which there exists a $t$-dimensional orthogonal representation of $G$ over~$\Fset$.\footnote{Orthogonal representations of graphs are sometimes defined in the literature as orthogonal representations of the complement, namely, the definition requires vectors associated with {\em non-adjacent} vertices to be orthogonal. We have decided to use here the other definition, but one may view the notation $\od(G,\Fset)$ as standing for $\xi(\overline{G},\Fset)$.}
The orthogonality dimension parameter is closely related to several other well-studied graph parameters. In particular, for every graph $G$ and every field $\Fset$, $\od(G, \Fset)$ is sandwiched between the clique number and the chromatic number of $G$, that is, $\omega(G) \leq \od(G, \Fset) \leq \chi(G)$.

Orthogonal representations of graphs have been found useful over the years for various applications in information theory and in theoretical computer science.
They were originally introduced over the real field in a seminal work of Lov\'asz~\cite{Lovasz79}, where they were used to define the influential Lov\'asz $\vartheta$-function. The latter was used in~\cite{Lovasz79} to determine the Shannon capacity, a notoriously difficult information-theoretic graph parameter, of the cycle on five vertices, and in the last decades it was successfully applied in algorithmic and combinatorial results (see, e.g.,~\cite{knuth1994sandwich,Feige97,AlonK98}).
The orthogonality dimension of graphs plays an important role in several areas of computational complexity.
Over finite fields, the orthogonality dimension and its extension due to Haemers~\cite{Haemers79} to a graph parameter called {\em minrank} have attracted a significant attention in circuit complexity, and more specifically, in the study of Valiant's rigidity approach to circuit lower bounds~\cite{Valiant77} (see, e.g.,~\cite{CodenottiPR00,Riis07,GolovnevRW17}).
Over the complex field, the orthogonality dimension was used in a characterization of the quantum communication complexity of promise equality problems~\cite{deWolfThesis,BrietBLPS15,BrietZ17} and in the study of the quantum chromatic number~\cite{CameronMNSW07,ScarpaS12}.
The orthogonality dimension parameter was also investigated in the contexts of hardness of approximation~\cite{Peeters96,LangbergS08}, integrality gaps for linear programming~\cite{HuTS17,Haviv18topo}, and algorithms based on semi-definite programming~\cite{ChlamtacH14,HavivMFCS19}.

The present work studies two aspects of the orthogonality dimension of graphs.
First, we prove an $\NP$-hardness result for approximating the orthogonality dimension of graphs over the real field $\R$.
At the heart of the proof lies a geometric result, which might be of independent interest, on a {\em generalization} of the orthogonality dimension parameter for the family of {\em Kneser graphs}, analogously to a long-standing graph-theoretic conjecture due to Stahl~\cite{Stahl76}.
The second aspect of the orthogonality dimension parameter considered in this work, motivated by the area of circuit complexity, is that of determining the smallest possible orthogonality dimension  over finite fields of the complement of graphs that do not contain certain fixed subgraphs. In this context, we prove a new bound on the minrank parameter over finite fields for the family of {\em generalized Kneser graphs}.
The bound is used to settle a couple of questions raised by Codenotti, Pudl{\'{a}}k, and Resta in~\cite{CodenottiPR00} and to disprove their Odd Alternating Cycle Conjecture over every finite field.

\subsection{Our Contribution}

\subsubsection{The Generalized Orthogonality Dimension of Kneser Graphs}

We start by considering the computational hardness of determining the orthogonality dimension of graphs over the real field $\R$.
The challenge of understanding the hardness of this parameter was posed already in the late eighties by Lov\'asz, Saks, and Schrijver~\cite{LovaszSS89} (see also~\cite[Chapter~10]{LovaszBook}), and yet, the problem is far from being well-understood.
It is easy to see that deciding whether an input graph $G$ satisfies $\od(G,\R) \leq t$ can be solved in polynomial running-time for $t \in \{1,2\}$, and Peeters~\cite{Peeters96} has shown that it is $\NP$-hard for $t \geq 3$.
His result is known to imply that for every $t \geq 6$ it is $\NP$-hard to decide whether an input graph $G$ satisfies $\od(G,\R) \leq t$ or $\od(G,\R) \geq \lceil 4t/3 \rceil$ (see~\cite{HavivMFCS19}).
In the current work, we improve on the $4/3$ multiplicative gap and prove the following.

\begin{theorem}\label{thm:IntroHardness}
There exists a constant $c$ such that for every sufficiently large integer $t$, it is $\NP$-hard to decide whether an input graph $G$ satisfies $\od(G,\R) \leq t$ or $\od(G,\R) \geq 3t/2-c$.
\end{theorem}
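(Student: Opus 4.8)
The plan is to follow the standard template for hardness-of-approximation results on the orthogonality dimension: reduce from the $\NP$-hard problem of deciding whether $\od(G_0,\R)\le k$ versus $\od(G_0,\R)\ge \lceil 4k/3\rceil$ (from Peeters and \cite{HavivMFCS19}) and amplify the multiplicative gap by a gadget-composition argument, replacing the role played there by a direct product with the Kneser gap result. Concretely, I would fix an instance $G_0$ on $n$ vertices with the promise that either $\od(G_0,\R)\le k$ or $\od(G_0,\R)$ is large, take $G_0$ together with a suitable Kneser graph $\Kneser{m}{d}{s}$ whose generalized orthogonality dimension exhibits a large gap, and consider a graph product (the ``OR'' / lexicographic-type product, or a tensor-type product depending on which one multiplies orthogonal representations correctly) whose orthogonality dimension behaves multiplicatively on the ``yes'' side and super-multiplicatively on the ``no'' side. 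The key input is the geometric theorem about the generalized orthogonality dimension of Kneser graphs alluded to in the abstract (the analog of Stahl's conjecture): it gives a family of graphs with $\od$ value roughly $r$ in one regime but forced up to roughly $(3/2)r - O(1)$ once we demand the generalized (higher-``local-chromatic''-type) version, and this $3/2$ factor is exactly what will survive the composition to yield the $3t/2-c$ gap in Theorem~\ref{thm:IntroHardness}.

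The steps, in order, are: (1) state precisely the Kneser-graph geometric lemma and the value of $c$ it produces; (2) describe the reduction graph $G = G_0 \star H$ for the appropriate product $\star$ and the appropriate Kneser gadget $H$, and set $t$ to be the product of the two ``yes''-case bounds; (3) prove the completeness direction --- if $\od(G_0,\R)\le k$ then an explicit tensor/Kronecker-type construction on orthogonal representations gives $\od(G,\R)\le t$, which is the routine direction; (4) prove the soundness direction --- if $\od(G_0,\R)$ is in the ``no'' regime, then any orthogonal representation of $G$ restricts, on each copy of $H$, to a structured representation that must satisfy the \emph{generalized} constraint, so the Kneser lemma forces $\od(G,\R)\ge (3/2)t - c$; (5) check that the whole reduction is polynomial-time and that $t$ can be made an arbitrary sufficiently large integer by choosing the parameters $m,d,s$ of the Kneser gadget.

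I expect step (4), the soundness analysis, to be the main obstacle. The difficulty is that an orthogonal representation of the product graph $G$ does not obviously decompose as a product of representations of the factors --- over $\R$ one only controls inner products, not the vectors themselves --- so one must argue that for a generic (or worst-case) vertex of $G_0$, the induced assignment on the corresponding Kneser copy cannot ``cheat'' by using dimensions shared across copies, i.e.\ that the cost really is additive/multiplicative across the gadget structure. This is precisely where the generalized orthogonality dimension parameter (rather than the plain one) enters: the extra constraints it encodes are exactly what a product representation is forced to satisfy, and the Stahl-type lower bound on that parameter for Kneser graphs is the quantitative engine. A secondary technical point is handling the additive constant: the $3t/2$ bound will naturally come out as $3t/2 - c$ for a constant $c$ coming from lower-order terms in the Kneser estimate and from rounding in the choice of $m,d,s$, and one has to verify that $c$ does not grow with $t$.
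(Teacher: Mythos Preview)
Your proposal has the right ingredients --- Kneser graphs and the generalized orthogonality dimension are indeed the engine --- but the reduction you sketch is not the one the paper uses, and as written it is unlikely to go through. You propose to \emph{start} from the known $4/3$ gap and then \emph{amplify} it via a graph product $G_0 \star H$ with a Kneser gadget $H$. There are two problems with this. First, orthogonality dimension is not known to behave multiplicatively (or even nicely super-multiplicatively on the soundness side) under any of the standard graph products, so step~(4) of your plan has no obvious mechanism: there is no reason an orthogonal representation of $G_0\star H$ should force, on each copy of $H$, the \emph{generalized} $k$-subspace constraint for the particular $k$ you need. Second, and more fundamentally, the $4/3$ gap is itself already a consequence of the very reduction you should be using, applied with a suboptimal gadget; it is not a natural starting point to amplify from.

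The paper's route is more direct. It invokes Proposition~\ref{prop:xi3vs4} (the Garey--Johnson-style reduction from~\cite{HavivMFCS19}, built on Peeters' hardness of $\od\le 3$ versus $\od\ge 4$) as a black box: for \emph{any} fixed graph $F$, it is $\NP$-hard to decide whether $\od(G,\R)\le \od_3(F,\R)$ or $\od(G,\R)\ge \od_4(F,\R)$. The entire task then reduces to exhibiting a single family of graphs $F$ with a large ratio $\od_4(F,\R)/\od_3(F,\R)$. Taking $F=K(t,3)$, one has $\od_3(K(t,3),\R)=t$ (from~\cite{BukhC18}), and the geometric work --- Theorem~\ref{thm:Intro_general_s} specialized to $k=4$, $s=3$ --- gives $\od_4(K(t,3),\R)\ge 3t/2-c$. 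So the ``soundness'' difficulty you anticipate is not about decomposing representations of a product graph; it is entirely contained in the lower bound $\od_4(K(d,3),\R)\ge 3d/2-c$, which is a self-contained geometric statement about a single Kneser graph. Once you have that bound and Proposition~\ref{prop:xi3vs4}, the theorem is immediate.
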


It is worth noting that in order to obtain hardness results for the orthogonality dimension parameter, it is natural to employ known hardness results regarding the closely related chromatic number of graphs. Indeed, it is easy to verify (see, e.g.,~\cite{HavivMFCS19}) that every graph $G$ satisfies
\[\log_3 \chi(G) \leq \od(G,\R) \leq \chi(G),\]
hence hardness of deciding whether an input graph $G$ satisfies $\chi(G) \leq t_1$ or $\chi(G) \geq t_2$ immediately implies the hardness of deciding whether it satisfies $\od(G,\R) \leq t_1$ or $\od(G,\R) \geq \log_3 t_2$. In particular, a result of Dinur, Mossel, and Regev~\cite{DinurMR06} on the hardness of the chromatic number implies that assuming a certain variant of the unique games conjecture, deciding whether a given graph $G$ satisfies $\od(G,\R) \leq 3$ or $\od(G,\R) \geq t$ is $\NP$-hard for every $t \geq 4$. However, if one is interested in standard $\NP$-hardness for the orthogonality dimension, the state of the art for the hardness of the chromatic number does not seem to imply any hardness results, despite some remarkable recent progress~\cite{BKO19,WZ20}.
Moreover, most hardness proofs for the chromatic number crucially use the fact that an upper bound on the independence number of a graph implies a strong lower bound on its chromatic number (namely, $\chi(G) \geq \frac{|V(G)|}{\alpha(G)}$), whereas an analogue of such a statement for the orthogonality dimension does not hold in general (see, e.g.,~\cite[Proposition~2.2]{HavivMFCS19}).

One technique for proving hardness results for the chromatic number that can be applied for the orthogonality dimension is that of Garey and Johnson~\cite{GareyJ76a}, who have related hardness of graph coloring to the {\em multichromatic numbers of Kneser graphs}. The $k$th multichromatic number of a graph~$G$, denoted by $\chi_k(G)$, is the smallest number of colors needed in order to assign to every vertex of $G$ a set of $k$ colors so that adjacent vertices are assigned to disjoint sets. Notice that $\chi_1(G)$ is simply the standard chromatic number $\chi(G)$. The family of Kneser graphs is defined as follows.
\begin{definition}[Kneser Graphs]\label{def:Kneser}
For integers $d \geq 2s$, the {\em Kneser graph} $K(d,s)$ is the graph whose vertices are all the $s$-subsets of $[d]=\{1,\ldots,d\}$, where two sets are adjacent if they are disjoint.
\end{definition}
\noindent
Note that the multichromatic numbers can be defined in terms of Kneser graphs, namely, $\chi_k(G)$ is the smallest integer $d$ for which there exists a homomorphism from $G$ to $K(d,k)$.

In the seventies, Stahl~\cite{Stahl76} has made the following conjecture.
\begin{conjecture}[Stahl's Conjecture~\cite{Stahl76}]\label{conj:Stahl}
For all integers $k$ and $d \geq 2s$,
\[\chi_k(K(d,s)) = \Big \lceil \frac{k}{s} \Big \rceil \cdot (d-2s)+2k.\]
\end{conjecture}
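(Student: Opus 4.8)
The plan is to prove the two inequalities separately, and it is convenient to first record a reformulation: writing $k = qs - r$ with $q = \lceil k/s \rceil$ and $0 \le r \le s-1$, the claimed value $\lceil k/s \rceil \cdot (d-2s) + 2k$ equals $qd - 2r$, so the conjecture asserts $\chi_k(K(d,s)) = qd - 2r$. The upper bound $\chi_k(K(d,s)) \le qd - 2r$ is the constructive direction. Here I would first invoke the subadditivity $\chi_{k+k'}(G) \le \chi_k(G) + \chi_{k'}(G)$, valid for every graph $G$ (glue colorings on disjoint palettes), together with the identity homomorphism $K(d,s) \to K(d,s)$, which gives $\chi_s(K(d,s)) \le d$; peeling off $q-1$ "full blocks" this way reduces the upper bound to the range $1 \le k \le s$, where the target value is $d - 2(s-k)$. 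For that range one exhibits an explicit homomorphism $K(d,s) \to K(d-2(s-k), k)$ interpolating between the identity map ($k = s$) and the standard proper $(d-2s+2)$-coloring $A \mapsto \min\{\min A,\, d-2s+2\}$ of the Kneser graph ($k=1$): roughly, send an $s$-set $A = \{a_1 < \cdots < a_s\}$ to a $k$-set built from its first $k$ elements after capping the $i$-th element at an appropriate threshold, the precise thresholds chosen so that disjoint sets still receive disjoint images. This is essentially Stahl's construction, and verifying it is routine bookkeeping.

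The lower bound $\chi_k(K(d,s)) \ge qd - 2r$ is the hard direction, and it is exactly here that the conjecture has resisted proof. Two elementary lower bounds should be disposed of first. One is the fractional bound $\chi_k(G) \ge k\cdot\chi_{\mathrm{f}}(G)$; since $\chi_{\mathrm{f}}(K(d,s)) = d/s$, this gives $\chi_k(K(d,s)) \ge kd/s$. The other is topological in origin but phrased chromatically: any homomorphism $K(d,s) \to K(d',k)$ forces $\chi(K(d,s)) \le \chi(K(d',k))$, so the Lov\'asz--Kneser theorem yields $d - 2s + 2 \le d' - 2k + 2$, i.e.\ $d' \ge d - 2s + 2k$. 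A short calculation using $d \ge 2s$ shows the first bound already attains $qd - 2r$ when $s \mid k$, and the second attains it when $q = 1$, i.e.\ $k \le s$; these, with a few further congruence classes, are the cases Stahl settled. For $s \nmid k$ together with $q \ge 2$, however, both bounds are \emph{strictly} below $qd - 2r$, so a genuinely stronger argument is required.

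To attack the remaining regime I see two routes. The first is to strengthen the topological argument so that it sees $q = \lceil k/s\rceil$ rather than only $\chi(K(d',k))$: a homomorphism $K(d,s) \to K(d',k)$ induces a $\Z_2$-equivariant map between the box complexes (or $\mathrm{Hom}$-complexes) of the two graphs, and one would like to partition $[d']$ into $q$ blocks and run a colorful Borsuk--Ulam / Dol'nikov-type argument once per block, extracting an additive gain of $d$ from each full block and a residue governed by $r$. The second route is purely combinatorial: prove an \emph{anti-subadditive} recursion $\chi_k(K(d,s)) \ge \chi_{k-s}(K(d,s)) + d$, dual to the subadditive upper bound, which would reduce the lower bound to the base range $k \le s$ handled above; concretely, this asks that from any $k$-fold coloring of $K(d,s)$ with $N$ colors one can extract a $(k-s)$-fold coloring on at most $N - d$ colors, presumably through an exchange/shifting argument on the color classes.

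I expect this to be the decisive and most stubborn step. The $\Z_2$-index of the box complex of $K(d',k)$ is too coarse an invariant to detect the block decomposition, so the first route seems to demand a finer equivariant invariant than anyone has extracted from Kneser-type complexes; and the second route has no obvious handle, since multichromatic colorings of Kneser graphs are far less rigid than ordinary proper colorings, so the hoped-for exchange argument may simply fail to exist in the clean form needed. In short, the upper bound and the special-case lower bounds are within reach by the methods sketched, but the general lower bound in the range $s \nmid k$, $\lceil k/s\rceil \ge 2$ is precisely the obstacle that has kept the conjecture open.
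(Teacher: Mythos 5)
This statement is not a theorem of the paper: it is Stahl's conjecture, which the paper explicitly states has remained open for over forty years and cites only as motivation (the paper proves analogous bounds for the \emph{orthogonality-dimension} variant, not the multichromatic statement itself). So there is no proof in the paper to compare yours against, and your proposal, by its own admission, does not supply one either. Your preliminary reductions are sound: the reformulation $\chi_k(K(d,s)) = qd-2r$ with $k=qs-r$, the subadditivity $\chi_{k+k'}\le\chi_k+\chi_{k'}$ together with $\chi_s(K(d,s))\le d$, the fractional bound $\chi_k\ge k\,\chi_{\mathrm f}=kd/s$ (tight when $s\mid k$), and the Lov\'asz--Kneser bound $d'\ge d-2s+2k$ for any homomorphism $K(d,s)\to K(d',k)$ (tight when $k\le s$) all reproduce the known upper bound and the known special cases due to Stahl, Garey--Johnson, and Chv\'atal--Garey--Johnson. (One small imprecision: for $d=2s$ the elementary bounds are not ``strictly below'' the target; strictness requires $d>2s$.)

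The genuine gap is the one you name yourself: the lower bound $\chi_k(K(d,s))\ge qd-2r$ when $s\nmid k$ and $\lceil k/s\rceil\ge 2$. Neither of your two proposed routes is carried out, and neither is known to work. The topological route founders exactly where you suspect: the $\Z_2$-index of the box complex only certifies the ordinary chromatic number of the target, and no finer equivariant invariant extracting a ``per-block'' gain of $d$ has been found. The combinatorial route, the anti-subadditive recursion $\chi_k(K(d,s))\ge\chi_{k-s}(K(d,s))+d$, is itself an open strengthening with no known exchange argument behind it; the cases that have been settled in the literature instead rely on Hilton--Milner-type stability theorems describing the independent sets used by a multicoloring, and those arguments have not been pushed beyond $s\le 3$ (and a few other congruence classes). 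So what you have written is an accurate survey of why the conjecture is hard, not a proof; the decisive step is missing, and as far as anyone knows it is missing from the literature as well.
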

\noindent
Stahl's conjecture has received a significant attention in the literature over the years. Even very recently, it was related to the well-known recently disproved Hedetniemi's conjecture~\cite{TradifZ19}.
Nevertheless, more than forty years since it was proposed, Stahl's conjecture is still open.
It is known that the right-hand side in Conjecture~\ref{conj:Stahl} forms an upper bound on $\chi_k(K(d,s))$, and that this bound is tight up to an additive constant that depends solely on $s$~\cite{CGJ78,Stahl98}.
The precise statement of the conjecture was confirmed only for a few special cases. This includes the case of $k=1$ proved by Lov\'asz~\cite{LovaszKneser}, the cases of $s \leq 2$, $k \leq s$, $d=2s+1$, and $k$ divisible by $s$ proved by Stahl~\cite{Stahl76,Stahl98}, and the case of $s=3$ and $k=4$ proved by Garey and Johnson~\cite{GareyJ76a} (extended to $s=3$ with any $k$ in~\cite{Stahl98}). The result of~\cite{GareyJ76a} was combined there with a simple reduction to show that for every $t \geq 6$, it is $\NP$-hard to decide whether a given graph $G$ satisfies $\chi(G) \leq t$ or $\chi(G) \geq 2t-4$.

The recent work~\cite{HavivMFCS19} has suggested to borrow the reduction of~\cite{GareyJ76a} to prove hardness results for the orthogonality dimension parameter. This approach requires the following generalization of orthogonal representations of graphs over the reals.

\begin{definition}[Orthogonal Subspace Representation\footnote{Over the complex field, the definition is equivalent to the notion of a projective representation from~\cite[Definition~6.1]{manvcinska2016quantum}.}]\label{def:ortho_k-subspace}
A $t$-dimensional {\em orthogonal $k$-subspace representation} of a graph $G=(V,E)$ is an assignment of a subspace $U_v \subseteq \R^t$ with $\dim (U_v)=k$ to every vertex $v \in V$, such that the subspaces $U_v$ and $U_{v'}$ are orthogonal whenever $v$ and $v'$ are adjacent in $G$. For a graph $G$, let $\od_k(G,\R)$ denote the smallest integer $t$ for which there exists a $t$-dimensional orthogonal $k$-subspace representation of $G$.
\end{definition}
\noindent
Note that for $k=1$, Definition~\ref{def:ortho_k-subspace} coincides with the orthogonality dimension over the reals, and that for every graph $G$ and every $k$ it holds that $\od_k(G,\R) \leq \chi_k(G)$.

A combination of the hardness result of Peeters~\cite{Peeters96} and the reduction of~\cite{GareyJ76a} implies the following.

\begin{proposition}[{\cite[Theorem~1.3]{HavivMFCS19}}]\label{prop:xi3vs4}
For every graph $F$, it is $\NP$-hard to decide whether an input graph $G$ satisfies $\od(G,\R) \leq \od_3(F,\R)$ or $\od(G,\R) \geq \od_4(F,\R)$.
\end{proposition}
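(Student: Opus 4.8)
The plan is to give a polynomial-time reduction from the $\NP$-hard problem of Peeters~\cite{Peeters96}: deciding whether an input graph $H$ satisfies $\od(H,\R) \leq 3$ or $\od(H,\R) \geq 4$ (these are the only options since $\od(\cdot,\R)$ is a positive integer). Given $H$, and with the fixed graph $F$, I would form the lexicographic product $G = F[H]$, i.e.\ the graph on vertex set $V(F) \times V(H)$ in which $(f,h)$ and $(f',h')$ are adjacent exactly when $f \sim_F f'$, or when $f=f'$ and $h \sim_H h'$. Since $F$ is fixed, $G$ has size polynomial in that of $H$ and is clearly computable in polynomial time, so it suffices to prove two implications: (i) if $\od(H,\R) \leq 3$ then $\od(G,\R) \leq \od_3(F,\R)$; and (ii) if $\od(H,\R) \geq 4$ then $\od(G,\R) \geq \od_4(F,\R)$.

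For (i), set $j = \od(H,\R) \in \{1,2,3\}$ and fix a $j$-dimensional orthogonal representation $(x_h)_{h \in V(H)}$ of $H$ over $\R$. I would first record the easy monotonicity $\od_k(F,\R) \leq \od_{k'}(F,\R)$ for $k \leq k'$: replacing each subspace of a $k'$-subspace representation of $F$ by an arbitrary $k$-dimensional subspace of it preserves all the required orthogonalities. Hence it is enough to exhibit an $N$-dimensional orthogonal representation of $G$ with $N := \od_j(F,\R)$. Fix a $j$-subspace orthogonal representation $(W_f)_{f \in V(F)}$ of $F$ in $\R^N$, and for each $f$ fix a linear isometry $\beta_f : \R^j \to W_f \subseteq \R^N$. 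Assign to the vertex $(f,h)$ the vector $u_{(f,h)} := \beta_f(x_h)$. Then $\langle u_{(f,h)}, u_{(f,h)} \rangle = \langle x_h, x_h \rangle \neq 0$; if $f \sim_F f'$ then $u_{(f,h)} \in W_f \perp W_{f'} \ni u_{(f',h')}$; and if $f=f'$ with $h \sim_H h'$ then $\langle u_{(f,h)}, u_{(f,h')} \rangle = \langle x_h, x_{h'} \rangle = 0$. This gives (i).

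For (ii), I would argue contrapositively: starting from an arbitrary $t$-dimensional orthogonal representation $(u_{(f,h)})$ of $G$ over $\R$, I will show that $t < \od_4(F,\R)$ forces $\od(H,\R) \leq 3$. For each $f \in V(F)$ put $U_f := \linspan\{u_{(f,h)} : h \in V(H)\} \subseteq \R^t$. If $f \sim_F f'$ then every vertex of the $f$-copy is adjacent in $G$ to every vertex of the $f'$-copy, so $u_{(f,h)} \perp u_{(f',h')}$ for all $h,h'$, and hence $U_f \perp U_{f'}$. If $\dim U_f \geq 4$ for every $f$, then picking a $4$-dimensional subspace inside each $U_f$ produces a $4$-subspace orthogonal representation of $F$ in $\R^t$, so $t \geq \od_4(F,\R)$. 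Therefore, if $t < \od_4(F,\R)$ there is some $f_0$ with $\dim U_{f_0} \leq 3$; and within the copy $\{f_0\} \times V(H)$, vertices $(f_0,h)$ and $(f_0,h')$ are adjacent in $G$ iff $h \sim_H h'$, so $h \mapsto u_{(f_0,h)}$ is an orthogonal representation of $H$ landing in the space $U_{f_0}$ of dimension at most $3$. Hence $\od(H,\R) \leq 3$, which establishes (ii) and the proposition.

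I do not expect a serious obstacle here; the argument is short. The one place that needs attention is that it genuinely uses positive-definiteness of the real inner product (every nonzero vector has nonzero norm, and spanned subspaces $U_f$ behave non-degenerately), which is exactly why the statement is phrased over $\R$. The substantive idea — the engine of this Garey--Johnson-style reduction — is the dichotomy in step (ii): any cheap orthogonal representation of the composed graph $F[H]$ must either ``spread out'' within every copy of $H$ enough to yield a $4$-subspace representation of $F$, or else collapse some copy of $H$ into three dimensions, and there is no middle ground.
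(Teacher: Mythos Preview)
Your proof is correct and follows exactly the approach the paper indicates: the proposition is not proved in the paper but is cited from~\cite{HavivMFCS19} as arising from ``a combination of the hardness result of Peeters~\cite{Peeters96} and the reduction of~\cite{GareyJ76a},'' and the Garey--Johnson reduction is precisely the lexicographic product $G=F[H]$ that you employ. Your two implications (i) and (ii) are verified cleanly, and your closing remark about positive-definiteness correctly pinpoints where the argument uses that the field is $\R$.
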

\noindent
With Proposition~\ref{prop:xi3vs4} in hand, it is of interest to find graphs $F$ with a large gap between $\od_3(F,\R)$ and $\od_4(F,\R)$.
In light of Conjecture~\ref{conj:Stahl}, it is natural to consider the generalized orthogonality dimension parameters for the family of Kneser graphs.
For $k=1$, it was shown in~\cite{Haviv18topo} that the standard chromatic number and the standard orthogonality dimension over $\R$ coincide on all Kneser graphs. In addition, a result of Bukh and Cox~\cite[Proposition~23]{BukhC18} implies that for every $d \geq 2s$ and every~$k$, $\od_k(K(d,s),\R) \geq kd/s$. This implies that the $k$th chromatic number and the $k$th orthogonality dimension over $\R$ coincide on $K(d,s)$ whenever $k$ is divisible by $s$.

In this work we initiate a systematic study of the generalized orthogonality dimension parameters of Kneser graphs, analogously to Conjecture~\ref{conj:Stahl}.
Let us already mention that the arguments applied in the study of Stahl's conjecture do not seem to extend to our question.
The main reason is that the proofs in~\cite{Stahl76,GareyJ76a,CGJ78,Stahl98} use Hilton-Milner-type theorems to characterize the possible structures of the independent sets induced by generalized colorings of Kneser graphs, whereas in our setting, orthogonal subspace representations do not naturally induce large independent sets and the problem seems to require a more geometric approach.

The first non-trivial case is that of Kneser graphs $K(d,s)$ with $s=2$, for which we show that the generalized orthogonality dimension parameters are equal to the multichromatic numbers.

\begin{theorem}\label{thm:Intro_s=2}
For all integers $k \geq 1$ and $d \geq 4$,
\[\od_{k}(K(d,2),\R) = \Big \lceil \frac{k}{2} \Big \rceil \cdot (d-4)+2k.\]
\end{theorem}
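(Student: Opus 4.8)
The plan is to prove the two inequalities separately. For the upper bound one can simply invoke $\od_k(G,\R)\le\chi_k(G)$, valid for every graph $G$, together with the (elementary, and already known) fact that the right-hand side of Conjecture~\ref{conj:Stahl} is an upper bound on $\chi_k(K(d,s))$; specialized to $s=2$ this yields $\od_k(K(d,2),\R)\le\chi_k(K(d,2))\le\lceil k/2\rceil(d-4)+2k$. Note that establishing the matching lower bound also re-proves the $s=2$ case of Stahl's conjecture, as it forces $\chi_k(K(d,2))$ to lie between $\od_k(K(d,2),\R)$ and the very same quantity.

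The content is therefore the lower bound $\od_k(K(d,2),\R)\ge\lceil k/2\rceil(d-4)+2k$, and I would split it according to the parity of $k$. If $k=2m$ is even the target equals $m(d-4)+4m=md=kd/2$, which is precisely the lower bound $\od_k(K(d,s),\R)\ge kd/s$ of Bukh and Cox; no new idea is needed here. If $k=2m+1$ is odd the target equals $(m+1)d-2$, and for $d\ge 6$ this strictly exceeds $\lceil kd/2\rceil=md+\lceil d/2\rceil$, so the Bukh--Cox bound is no longer sufficient and the odd case is where the real work lies. (It is suggestive that the excess of the target over the fractional bound, namely $d/2-2$ for even $d$, does not depend on $m$ — it is the same amount by which the $k=1$ answer $\od(K(d,2),\R)=d-2$ exceeds the fractional bound.)

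For odd $k$ I would induct on $d$. The base cases $d\in\{4,5\}$ are handled directly: $K(4,2)\cong 3K_2$ satisfies $\od_k(K(4,2),\R)=2k$ (each edge needs $2k$ dimensions and the three edges can share the same space), and for $d=5$ the Bukh--Cox bound already gives $\od_k(K(5,2),\R)\ge\lceil 5k/2\rceil=5m+3=(m+1)\cdot 5-2$. For the inductive step ($d\ge 6$), suppose $K(d,2)$ has an orthogonal $k$-subspace representation $\{U_S\}$ in $\R^t$. The subspace $U_{\{d-1,d\}}$ has dimension $k$ and is orthogonal to $U_S$ for every $2$-subset $S\subseteq[d-2]$, so the restriction of $\{U_S\}$ to these $S$ is an orthogonal $k$-subspace representation of $K(d-2,2)$ contained in $U_{\{d-1,d\}}^\perp$; by the inductive hypothesis its span has dimension at least $\lceil k/2\rceil(d-6)+2k$, which already gives $t\ge\lceil k/2\rceil(d-6)+3k$. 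Since the target grows by exactly $k+1$ when $d$ increases by $2$ (for odd $k$), it remains to gain one more dimension, i.e., to show that this copy of $K(d-2,2)$ cannot occupy all of $U_{\{d-1,d\}}^\perp$.

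The main obstacle is exactly this single extra dimension, and it is the step where the parity of $k$ must be used — and a purely "localizing" approach cannot help, since every proper induced Kneser subgraph $K(d',2)$ with $d'\le 5$ already sits at the fractional bound. Assuming for contradiction that $\linspan\{U_S:S\subseteq[d-2]\}$ is all of $U_{\{d-1,d\}}^\perp$, one would bring in the remaining vertices — the $2$-subsets meeting $\{d-1,d\}$, whose subspaces satisfy $U_{\{i,d-1\}}\perp U_{\{j,d\}}$ for $i\ne j$ and are each orthogonal to the span of the induced copy of $K(d-3,2)$ on $[d-2]\setminus\{i\}$ — and use the rigidity this forces to exhibit a subspace of odd dimension that would have to split as an orthogonal direct sum of two subspaces of equal dimension, which is impossible. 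Pinning down that odd-dimensional subspace and the two equal halves, and ruling out the degenerate configurations, is the crux. A plausible alternative route to the same parity input is to adapt to the $k$-subspace setting the topological argument behind the equality $\od(K(d,s),\R)=\chi(K(d,s))$ in the case $k=1$, which is exactly the case responsible for the $k$-independent correction above.
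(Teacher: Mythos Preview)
Your upper bound and the even-$k$ lower bound are fine; in fact for even $k$ the Bukh--Cox inequality $\od_k(K(d,2),\R)\ge kd/2$ is a cleaner route than what the paper uses. The problem is the odd-$k$ case, where your argument is incomplete. Your induction steps $d\mapsto d-2$ and gains $k$ dimensions (from $U_{\{d-1,d\}}$), while the target increases by $k+1$; you correctly isolate the missing single dimension but do not prove it. The sketch you give---assume $\linspan\{U_S:S\subseteq[d-2]\}=U_{\{d-1,d\}}^{\perp}$ and use the ``mixed'' vertices $\{i,d-1\},\{j,d\}$ to force an odd-dimensional space to split into two equal orthogonal halves---is not carried out: you do not identify the putative odd-dimensional subspace, you do not explain which pair of subspaces would have to be equal-dimensional orthogonal complements inside it, and you do not rule out degenerate cases. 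As you yourself say, this is ``the crux''; the proposal stops exactly where the difficulty begins, and the suggested ``alternative route'' via topology is likewise only a hope.

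The paper avoids this extra-dimension bookkeeping entirely by inducting $d\mapsto d-1$ (not $d-2$) on the graph $K^{-}(d,2)$ obtained by deleting a single vertex, and by gaining exactly $\lceil k/2\rceil$ dimensions at each step, uniformly in the parity of $k$. The mechanism is a dichotomy on the intersections $U_{\{1,3\}}\cap U_{\{1,i\}}$ for $i\ge 4$: if some such intersection has dimension $\ge\lceil k/2\rceil$, that intersection is orthogonal to an induced $K^{-}(d-1,2)$ and one is done; otherwise all intersections are $\le\lceil k/2\rceil-1$, and a general-position lemma (Lemma~\ref{lem_cor:good_subspace}) yields a subspace $U\subseteq U_{\{1,3\}}$ of dimension $k-\lceil k/2\rceil+1\ge\lceil k/2\rceil$ that meets every $U_{\{1,i\}}$ trivially, after which one \emph{projects} each $U_{\{1,i\}}$ off $U$ and reassigns it to $\{3,i\}$ to manufacture a valid representation of $K^{-}(d-1,2)$ inside $U^{\perp}$. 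The point is that the parity issue never arises because the step size matches the per-step gain; the projection trick is the idea your proposal is missing.
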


We proceed by considering a general $s \geq 3$ and prove the following lower bound.

\begin{theorem}\label{thm:Intro_general_s}
For every integers $k \geq s \geq 3$ there exists $c=c(s,k)$ such that for all integers $d \geq 2s$,
\[\od_{k}(K(d,s),\R) \geq \frac{k- \lceil \frac{k+1}{s} \rceil +1}{s-1} \cdot d-c.\]
\end{theorem}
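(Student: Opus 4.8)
The plan is to reduce the bound to a single ``peeling'' step and to establish that step by a geometric dimension count involving the subspaces assigned to $s$-sets through a fixed $(s-1)$-element core. Fix an orthogonal $k$-subspace representation $A \mapsto U_A$ of $K(d,s)$ in $\R^t$ with $t = \od_k(K(d,s),\R)$, and assume without loss of generality that the $U_A$'s span $\R^t$. Write $q = \lceil \frac{k+1}{s}\rceil$, so that $q = \lfloor k/s\rfloor + 1$ and the target coefficient is $\frac{k-q+1}{s-1}$. I claim it suffices to prove, for every sufficiently large $d$, the recurrence
\[ \od_k(K(d,s),\R) \;\ge\; \od_k\big(K(d-(s-1),s),\R\big) + (k-q+1) . \]
Iterating this from $d$ down to a constant ground-set size $d_0=d_0(s)$, handling the remaining range of $d$ by monotonicity (an induced subgraph of $K(d,s)$), and using the trivial bound $\od_k(K(d_0,s),\R)\ge k$ then yields the theorem with $c = c(s,k) := \frac{k-q+1}{s-1}\,d_0$.

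To prove the recurrence, fix an $(s-1)$-subset $L\subseteq[d]$, set $M = [d]\setminus L$ and $m = |M| = d-(s-1)$. The $s$-subsets of $M$ induce a copy of $K(m,s)$ inside $K(d,s)$, so the subspace $V := \sum_{A\subseteq M}U_A$ satisfies $\dim V \ge \od_k(K(m,s),\R)$. Since $t = \dim V + \dim V^{\perp}$, the recurrence follows once we show $\dim V^{\perp} \ge k-q+1$. The relevant geometry concerns the ``petals'' $U_{L\cup\{x\}}$, $x\in M$: the set $L\cup\{x\}$ is disjoint from every $s$-subset of $M\setminus\{x\}$, hence $U_{L\cup\{x\}}$ is orthogonal to $V_{\widehat{x}} := \sum_{A\subseteq M\setminus\{x\}}U_A$, the span of the sub-representation of $K(m-1,s)$ on ground set $M\setminus\{x\}$. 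Writing $V = V_{\widehat{x}}\oplus W_x$ with $W_x := V\cap V_{\widehat{x}}^{\perp}$, orthogonality forces $U_{L\cup\{x\}}\cap V \subseteq W_x$, so the projection of $U_{L\cup\{x\}}$ onto $V^{\perp}$ has dimension $k-\dim\!\big(U_{L\cup\{x\}}\cap V\big) \ge k-\dim W_x$. Thus if $\dim W_x\le q-1$ for some $x$ we are done at once, and the whole difficulty is the complementary regime in which every petal meets $V$ in an almost full-dimensional way, i.e.\ $\dim W_x\ge q$ for all $x\in M$.

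In that regime one must combine many petals simultaneously. The plan is to pick pairwise disjoint $s$-subsets $D_1,\dots,D_s\subseteq M$ (possible once $d$ is large), to observe that a petal $U_{L\cup\{x\}}$ with $x$ outside $D_1\cup\cdots\cup D_s$ is orthogonal to the $ks$-dimensional orthogonal direct sum $U_{D_1}\oplus\cdots\oplus U_{D_s}\subseteq V$, while a petal with $x\in D_i$ is still orthogonal to $\bigoplus_{j\ne i}U_{D_j}$, and then to run a rank count that plays the $s$ mutually orthogonal ``slots'' $U_{D_1},\dots,U_{D_s}$ (each only $k$-dimensional) against the $k$-dimensional petals, exploiting the arithmetic slack $k-(s-1)(q-1)\ge 1$ between $k$ and the relevant multiple of $s-1$. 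The outcome should be that the petals collectively occupy at least $k-q+1$ directions outside $V$.

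I expect this last step to be the main obstacle. The straightforward ways of bounding the span of the petals relative to $V$ — using a single witness $s$-set, or a whole partition of $M$ into witness $s$-sets, or treating the petals as a path/matching of mutually orthogonal subspaces — only deliver coefficients of order $k/s$ (matching the known bound of Bukh and Cox) or even $k/(2(s-1))$, well short of the required $\frac{k-q+1}{s-1}$. Moreover there is a recurring technical nuisance: two petals $U_{L\cup\{x\}}$ and $U_{L\cup\{y\}}$ around the same core are never forced to be orthogonal, and even petals around two disjoint cores $L,L'$ fail to be orthogonal precisely when they share their free element. Squeezing out the extra rate past these obstructions is exactly where the argument has to become genuinely geometric rather than combinatorial, and it is the part I would expect to require the most care.
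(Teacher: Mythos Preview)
Your recurrence and your choice of $q=\lceil\frac{k+1}{s}\rceil$ match the paper's, and your ``easy case'' (some $\dim W_x\le q-1$, hence a single petal already contributes $k-q+1$ directions outside $V$) is correct. The genuine gap is exactly where you say it is: the ``hard case'' where every $W_x$ is large. Your proposed attack via $s$ disjoint witness sets $D_1,\dots,D_s$ and a rank count is not developed, and as you yourself note, the natural variants only recover the $k/s$ rate. Nothing in your write-up indicates how to bridge from $k/s$ to $\frac{k-q+1}{s-1}$, so as it stands the argument does not prove the theorem.

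The paper resolves the hard case by abandoning the attempt to bound $\dim V^{\perp}$ altogether. Two differences are essential. First, in the inductive step the paper does not require the restricted representation on $M$ to already lie in a subspace of dimension $t-(k-q+1)$; instead it selects a $(k-q+1)$-subspace $U$ inside a single ``good'' petal $U_A$ (via Lemma~\ref{lem_cor:good_subspace}) and \emph{modifies} the representation by projecting certain subspaces onto $U^{\perp}$, checking that the projected assignment is still a valid orthogonal $k$-subspace representation of $K(d-(s-1),s)$. This is strictly more flexible than your $V^{\perp}$ bound. Second, the true ``hard case'' in the paper is when \emph{no} vertex is good: then for every $A$ one can find a nonzero vector $u_A\in U_A$ lying simultaneously in $U_{C_{j+1}}$ and in $U_{C_1}+\cdots+U_{C_j}$ for suitable neighbours $C_1,\dots,C_{j+1}$, which forces any $D$ with $U_D$ not orthogonal to $u_A$ to contain at least two elements of a fixed $(2s-1)$-set. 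A greedy selection of such vectors produces $\Omega_{s,k}(d^2)$ pairwise orthogonal nonzero vectors, so $t\ge\Omega(d^2)$, overwhelming the linear target once $d\ge d_0(s,k)$; no induction is used in this branch. Your outline does not contain this dichotomy or anything playing its role.
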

\noindent
Note that for $k = \ell \cdot s -1$ where $\ell$ is an integer, the bound provided by Theorem~\ref{thm:Intro_general_s} is tight up to the additive constant $c$.
Indeed, in this case we get that there exists a constant $c$ such that for all integers $d \geq 2s$ it holds that
\[\ell \cdot d -c \leq \od_{\ell \cdot s-1}(K(d,s),\R) \leq \chi_{\ell \cdot s-1}(K(d,s)) \leq \ell \cdot d -2.\]
Note further that for the special case of $k=4$ and $s=3$, Theorem~\ref{thm:Intro_general_s} implies that there exists a constant $c$ such that $\od_4(K(d,3),\R) \geq 3d/2-c$ for every sufficiently large integer $d$. This, combined with Proposition~\ref{prop:xi3vs4} and the fact that $\od_3(K(d,3),\R) = d$, yields our hardness result Theorem~\ref{thm:IntroHardness}.

It will be interesting to figure out if the bounds given in Theorem~\ref{thm:Intro_general_s} can be tightened to the quantity given in the right-hand side of Conjecture~\ref{conj:Stahl}, at least up to an additive term independent of $d$.
In particular, it will be nice to decide whether for all integers $d \geq 6$ it holds that $\od_4(K(d,3),\R) = 2d-4$.
A positive answer would imply that for every $t \geq 6$, it is $\NP$-hard to decide whether an input graph $G$ satisfies $\od(G) \leq t$ or $\od(G) \geq 2t-4$. We remark, however, that the approach suggested by Proposition~\ref{prop:xi3vs4} for the hardness of the orthogonality dimension cannot yield a multiplicative hardness gap larger than $2$, as it is easy to see that every graph $F$ satisfies $\od_4(F,\R) \leq \od(F,\R) + \od_3(F,\R) \leq 2 \cdot \od_3(F,\R)$.

\subsubsection{The Orthogonality Dimension of Generalized Kneser Graphs}

We next consider the orthogonality dimension over finite fields of the complement of graphs that do not contain some fixed subgraphs.
In fact, in this context we consider an extension of the orthogonality dimension parameter, called minrank, that was introduced by Haemers in~\cite{Haemers79} and is defined as follows.

\begin{definition}[Minrank]\label{def:minrank}
Let $G=(V,E)$ be a directed graph on the vertex set $V=[n]$ and let $\Fset$ be a field.
We say that an $n$ by $n$ matrix $M$ over $\Fset$ {\em represents} $G$ if $M_{i,i} \neq 0$ for every $i \in V$, and $M_{i,j}=0$ for every distinct $i,j \in V$ such that $(i,j) \notin E$.
The {\em minrank} of $G$ over $\Fset$ is defined as
\[{\minrank}_\Fset(G) =  \min\{{\rank}_{\Fset}(M)\mid M \mbox{ represents }G\mbox{ over }\Fset\}.\]
The definition is naturally extended to (undirected) graphs by replacing every undirected edge with two oppositely directed edges.
\end{definition}
\noindent
Note that for every graph $G$ and every field $\Fset$, ${\minrank}_\Fset(\overline{G}) \leq \od(G,\Fset)$.\footnote{Indeed, given a $t$-dimensional orthogonal representation of an $n$-vertex graph $G$ over a field $\Fset$, consider the matrix $B \in \Fset^{n \times t}$ whose rows contain the vectors associated with the vertices of $G$. Then, the $n$ by $n$ matrix $B \cdot B^T$ represents $\overline{G}$ and has rank at most $t$ over $\Fset$, hence ${\minrank}_\Fset(\overline{G}) \leq t$.}

We consider here the question of whether there are graphs with no short odd cycles and yet low minrank over finite fields.
This question is motivated by the area of circuit complexity, and more specifically, by Valiant's approach to circuit lower bounds~\cite{Valiant77}, as described next.
The {\em rigidity} of an $n$ by $n$ matrix $M$ over a field $\Fset$ with respect to a given parameter $r$ is the smallest number of entries that one has to change in $M$ in order to reduce its rank over $\Fset$ to below $r$.
Roughly speaking, it was shown in~\cite{Valiant77} that $n$ by $n$ matrices with large rigidity for $r = \eps \cdot n$ where $\eps>0$ is a constant can be used to obtain superlinear lower bounds on the size of logarithmic depth arithmetic circuits computing linear transformations.
In 2000, Codenotti, Pudl{\'{a}}k, and Resta~\cite{CodenottiPR00} have proposed the {\em Odd Alternating Cycle Conjecture}, stated below.
By an {\em alternating odd cycle} we refer to a directed graph which forms an odd cycle when the orientation of the edges is ignored, and such that the orientation of the edges alternates with one exception.
\begin{conjecture}[The Odd Alternating Cycle Conjecture~\cite{CodenottiPR00}]\label{conj:alternating}
For every field $\Fset$ there exist $\eps >0$ and an odd integer $\ell$ such that every $n$-vertex directed graph $G$ with ${\minrank}_\Fset (G)  \leq \eps \cdot n$ contains an alternating cycle of length $\ell$.
\end{conjecture}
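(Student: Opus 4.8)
The plan is to \emph{refute} Conjecture~\ref{conj:alternating} for every finite field $\Fset$. First I would reduce it to a purely undirected, positive statement: it suffices to produce, for every finite field $\Fset$ and every odd integer $\ell$, a constant $\delta=\delta(\ell)>0$ and an infinite family of undirected graphs $G$ on $n$ vertices with (i) odd girth larger than $\ell$ (no odd cycle of length at most $\ell$), and (ii) $\minrank_\Fset(G)\le n^{1-\delta}$. Indeed, view $G$ as a directed graph by replacing each edge with two oppositely oriented edges; this leaves $\minrank_\Fset$ unchanged by definition, and since every adjacent pair now carries both orientations, such a directed graph contains an alternating cycle of length $\ell'$ if and only if the underlying $G$ contains a $C_{\ell'}$. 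Hence for large $n$ we get $\minrank_\Fset(G)\le n^{1-\delta}\le\eps\cdot n$ while $G$ has no alternating cycle of length $\ell$, contradicting the conjecture for any candidate pair $(\eps,\ell)$, and therefore over every finite field. The case $\ell=3$ is exactly a triangle-free graph whose complement has small orthogonality dimension over $\Fset$.

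For the construction I would use generalized Kneser graphs $\Kneser{d}{s}{t}$, whose vertices are the $s$-subsets of $[d]$ with $A\sim B$ whenever $|A\cap B|<t$, choosing $d,s,t$ as functions of $\ell$ and of the characteristic of $\Fset$. The odd-girth bound comes from a parity argument: along any closed walk $A_0\sim A_1\sim\cdots\sim A_{m-1}\sim A_0$, each ground-set element $j\in[d]$ undergoes an even number of membership switches, hence at most $m-1$ of them when $m$ is odd; summing over $j$ gives $\sum_i |A_i\triangle A_{i+1}|\le d\,(m-1)$, whereas each edge forces $|A_i\triangle A_{i+1}|=2s-2|A_i\cap A_{i+1}|\ge 2(s-t+1)$. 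Thus no odd closed walk of length $m\le\ell$ exists once $d<2(s-t+1)\cdot\tfrac{\ell}{\ell-1}$, so the odd girth exceeds $\ell$. Fixing $\ell$ and taking, say, $t=\Theta(s/\ell)$ and $d=2s-t+q$ with $q=\Theta(s/\ell)$ below this threshold, we keep $n=\binom{d}{s}=2^{\Theta(s)}$ while guaranteeing odd girth $>\ell$.

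For the minrank upper bound I would represent $\Kneser{d}{s}{t}$ by a matrix of the form $M_{A,B}=f(|A\cap B|)$, where $f\colon\Z_{\ge 0}\to\Fset$ vanishes on the non-edge intersection sizes $\{t,t+1,\ldots,s-1\}$, is nonzero at $s$ (so $M$ has nonzero diagonal), and is nonzero at some intersection size below $t$ (so $M$ is not a scaled identity). Writing $f$ as a low-order combination of the inclusion functions $x\mapsto\binom{x}{i}$ turns $M$ into a sum of rank-one matrices $u_i u_i^{\top}$ with $u_i(A)=\mathbf 1[S\subseteq A]$ over $i$-subsets $S$, so $\minrank_\Fset(\Kneser{d}{s}{t})\le\binom{d}{\le m}$ with $m$ the highest inclusion level used. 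Over $\R$ one simply takes $f(x)=\prod_{i=t}^{s-1}(x-i)$, giving $m=s-t$, and a binomial/entropy estimate shows $\binom{d}{\le s-t}\le\binom{d}{s}^{1-\delta}$ in the regime above (with $\delta=\Theta(1/\ell^2)$). Over a finite field of characteristic $p$ this degree-$(s-t)$ polynomial is useless once $s-t\ge p$, since then the residues of $\{t,\ldots,s-1\}$ exhaust $\Fset$; instead one must exploit the base-$p$ expansions of $s$ and $t$, choosing them in suitable digit patterns so that a short combination of functions $x\mapsto\binom{x}{p^{j}}$ — which by Lucas's theorem read off base-$p$ digits of $|A\cap B|$ — already realizes the required vanishing with a \emph{small} level $m$, and hence $\binom{d}{\le m}\le\binom{d}{s}^{1-\delta}$.

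The main obstacle, I expect, is precisely this last step over finite fields, in the parameter window dictated by the odd-girth bound: there $d$ is only a $(1+\Theta(1/\ell))$ factor larger than $s$ and $s-t=\Theta(s)$ is large, so the generic polynomial method fails in fixed characteristic and one needs a genuinely more efficient low-rank representation of $\Kneser{d}{s}{t}$ built from the arithmetic of $p$ — this is the new minrank bound for generalized Kneser graphs that the paper must supply. Making the three quantities — odd girth larger than $\ell$, number of vertices $n$, and minrank at most $n^{1-\delta}$ — fit together simultaneously, rather than any one of the steps in isolation, is where the real work lies.
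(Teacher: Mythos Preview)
Your reduction to the undirected statement and the odd-girth argument for $\Kneser{d}{s}{t}$ are both correct and essentially match the paper (which simply takes $s=d/2$ and $t=d/(2\ell)$). The gap is in your diagnosis of the minrank step over finite fields: the ordinary polynomial method does \emph{not} fail in fixed characteristic, and no Lucas-type, digit-by-digit construction is needed.

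The trick you are missing is to replace $f(x)=\prod_{i=t}^{s-1}(x-i)$ by the \emph{integer-valued} polynomial
\[
q(x)=\binom{x-t}{\,s-t\,}=\frac{1}{(s-t)!}\prod_{i=t}^{s-1}(x-i),
\]
which has the same degree $s-t$, vanishes at $t,t+1,\ldots,s-1$, and satisfies $q(s)=1$. The matrix $M$ with $M_{A,B}=q(|A\cap B|)$ is then an \emph{integer} matrix that represents $\Kneser{d}{s}{t}$ (diagonal entries equal $1$, non-edges equal $0$), and its rank over $\R$ is at most $\sum_{i=0}^{s-t}\binom{d}{i}$ by the multilinear expansion you already described. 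Now use the elementary fact that for any integer matrix, rank over $\Fset_p$ is at most rank over $\Q$ (a $\Q$-linear dependence among rows can be cleared to a primitive $\Z$-dependence, which survives reduction mod $p$). Hence $\minrank_{\Fset}(\Kneser{d}{s}{t})\le\sum_{i\le s-t}\binom{d}{i}$ holds uniformly for every finite field $\Fset$, with no dependence on the characteristic. With $s=d/2$ and $t=d/(2\ell)$ the entropy bound gives $\sum_{i\le s-t}\binom{d}{i}\le 2^{H(\frac12-\frac{1}{2\ell})d}=n^{1-\delta(\ell)}$, exactly the estimate you wanted. So the ``real work'' you anticipated collapses to this one normalization of the polynomial; your proposed base-$p$ approach is unnecessary and, as stated, too vague to constitute a proof.
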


It was proved in~\cite{CodenottiPR00} that Conjecture~\ref{conj:alternating} implies, if true, that certain explicit circulant matrices have superlinear rigidity.
In contrast, for $\ell =3$ it was shown in~\cite{CodenottiPR00} that there are $n$-vertex (undirected) triangle-free graphs $G$ satisfying ${\minrank}_\Fset(G) \leq O(n^{3/4})$ for every field $\Fset$, and it was left open whether the statement of Conjecture~\ref{conj:alternating} may hold for larger values of $\ell$.
In the recent work~\cite{Haviv18free} the conjecture was disproved over the real field, but remained open for finite fields which are of special interest in circuit complexity.
For the orthogonality dimension over the binary field $\Fset_2$, it was shown in~\cite{CodenottiPR00} that there exist triangle-free $n$-vertex graphs $G$ satisfying $\od(\overline{G},\Fset_2) = n/4+2$. It was asked there whether every $n$-vertex graph $G$ satisfying $\od(\overline{G},\Fset_2) \leq n/4+1$ must contain a triangle.


In the current work we prove a new upper bound on the minrank parameter over finite fields of {\em generalized Kneser graphs}. In these graphs the vertices are all the $s$-subsets of a universe $[d]$, where two sets are adjacent if their intersection size is smaller than some integer $m$. Note that for $m=1$ we get the standard family of Kneser graphs (see Definition~\ref{def:Kneser}).
In the proof we modify and extend an argument of~\cite{Haviv18free}, which is based on linear spaces of multivariate polynomials, building on a previous work of Alon~\cite{AlonUnion98}. For the precise statement, see Theorem~\ref{thm:minrk_Kneser}.
We turn to describe several applications of our bound.

As a first application, we establish an explicit construction of graphs that do not contain short odd cycles and yet have low minrank over every finite field.
\begin{theorem}\label{thm:Intro_Cycles}
For every odd integer $\ell \geq 3$ there exists $\delta = \delta(\ell) >0$ such that for every sufficiently large integer $n$, there exists an $n$-vertex graph $G$ with no odd cycle of length at most $\ell$ such that for every finite field $\Fset$,
\[{\minrank}_{\Fset}(G) \leq n^{1-\delta}.\]
\end{theorem}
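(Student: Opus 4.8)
The plan is to take $G$ to be a generalized Kneser graph $\Kneser{d}{s}{m}$ with carefully chosen parameters. Fix rationals $\gamma$ and $\alpha$ with $\frac{\ell-1}{2\ell} < \gamma < \alpha < \frac{1}{2}$ (such rationals exist for every odd $\ell\ge 3$), and for a growing integer $d$ set $s = \lfloor \alpha d\rfloor$ and $m = s - \lfloor\gamma d\rfloor$, so that $s-m = \lfloor\gamma d\rfloor$ and $m = \Theta(d)$. Let $n := \binom{d}{s}$, which tends to infinity with $d$. By Theorem~\ref{thm:minrk_Kneser} one has, for every finite field $\Fset$, $\minrank_\Fset(G) \le \binom{d}{\le s-m} = \sum_{i=0}^{\lfloor\gamma d\rfloor}\binom{d}{i} \le 2^{h(\gamma)d}$, where $h$ denotes the binary entropy function and the last step uses $\gamma < \tfrac12$; concretely, the bound is witnessed by the matrix whose $(A,B)$-entry is $\binom{s-|A\cap B|-1}{s-m}$, whose diagonal entries equal $(-1)^{s-m}\ne 0$ over \emph{any} field, which vanishes on non-edges (there $|A\cap B|\in\{m,\dots,s-1\}$, so the upper index is in $\{0,\dots,s-m-1\}$), and whose rank over $\Fset$ is at most $\binom{d}{\le s-m}$ once one expands $\binom{s-x-1}{s-m}$ in the integer basis $\bigl\{\binom{x}{i}\bigr\}_i$ and decomposes the matrix into set-inclusion matrices $W_iW_i^T$ — this $\pm1$ diagonal is exactly what lets the argument survive in positive characteristic. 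Since also $n \ge 2^{h(\alpha)d}/(d+1)$ and $h$ is strictly increasing on $[0,\tfrac12]$, we get $h(\gamma)<h(\alpha)$, hence $\minrank_\Fset(G)\le n^{1-\delta}$ for $\delta = \delta(\ell) := \tfrac12\bigl(1-h(\gamma)/h(\alpha)\bigr)>0$ and all sufficiently large $d$.

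It remains to check that $G$ has no odd cycle of length at most $\ell$. Suppose $A_0,A_1,\dots,A_{2t}$ is such a cycle (indices mod $2t+1$, with $2t+1\le\ell$). Each edge of $\Kneser{d}{s}{m}$ satisfies $|A_i\cap A_{i+1}|\le m-1$, so $|A_i\triangle A_{i+1}| = 2s-2|A_i\cap A_{i+1}| \ge 2(s-m+1)$. Summing the symmetric differences around the cycle and regrouping coordinate by coordinate through the indicator vectors $\chi_{A}\in\{0,1\}^d$, we get $2(2t+1)(s-m+1)\le \sum_{x=1}^{d} b_x$, where $b_x$ counts the indices $i$ with $(\chi_{A_i})_x\ne(\chi_{A_{i+1}})_x$. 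For each $x$, $b_x$ is the number of block boundaries of a cyclic binary word of length $2t+1$, hence even and at most $2t+1$, so $b_x\le 2t$; therefore $2(2t+1)(s-m+1)\le 2td$, i.e. $d\ge \tfrac{2t+1}{t}(s-m+1)\ge \tfrac{2\ell}{\ell-1}(s-m+1)$. But $s-m+1 = \lfloor\gamma d\rfloor+1 > \gamma d$ and $\gamma>\tfrac{\ell-1}{2\ell}$, so $\tfrac{2\ell}{\ell-1}(s-m+1) > d$ — a contradiction. Hence $G$ is free of odd cycles of length at most $\ell$.

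To obtain an $n$-vertex graph for every sufficiently large $n$ and not only for $n$ of the form $\binom{d}{s}$, take an arbitrary $n$-vertex induced subgraph of $\Kneser{d}{s}{m}$ for the smallest $d$ with $\binom{d}{\lfloor\alpha d\rfloor}\ge n$; deleting vertices neither creates odd cycles nor increases the minrank, and since $s=\Theta(d)$ consecutive values of $\binom{d}{\lfloor\alpha d\rfloor}$ differ by only a constant factor, so the bound $n^{1-\delta}$ survives after shrinking $\delta$ slightly. I expect the crux to be the quantitative balancing in the first paragraph: the odd-girth constraint forces $s-m$ to lie within a $\Theta(1/\ell)$ window of $d/2$, precisely the range in which $\binom{d}{\le s-m}$ is nearly as large as the trivial bound $2^d$, so the power saving has to be squeezed out of the strict concavity of $h$ near $\tfrac12$ (which is why one only gets $\delta=\Theta(1/\ell^2)$); the substantive input is Theorem~\ref{thm:minrk_Kneser} itself, whose passage from $\R$ to an arbitrary finite field rests on the characteristic-robust choice of representing matrix noted above.
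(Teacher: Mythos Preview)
Your proof is correct and follows essentially the same approach as the paper: take a generalized Kneser graph, invoke Theorem~\ref{thm:minrk_Kneser} for the minrank upper bound, verify the odd-girth condition, and pass to an induced subgraph for general $n$. The paper makes the simpler parameter choice $s=d/2$, $m=d/(2\ell)$ (which maximizes the vertex count for a given $d$ and hence yields a marginally better $\delta$) and cites the odd-girth bound as the known Lemma~\ref{lemma:cycle_K}; your symmetric-difference counting argument is precisely a direct proof of (a mild generalization of) that lemma, and your side remark on the alternative representing matrix $\binom{s-|A\cap B|-1}{s-m}$ is a valid variant of the paper's $\binom{|A\cap B|-m}{s-m}$.
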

\noindent
Theorem~\ref{thm:Intro_Cycles} immediately implies that the Odd Alternating Cycle Conjecture is false over every finite field, even for undirected graphs.
This rules out the approach suggested in~\cite{CodenottiPR00} for lower bounds on the rigidity of certain circulant matrices and thus falls into the recent line of non-rigidity results based on the polynomial method (see, e.g.,~\cite{AlmanW17,dvir2019matrix,DvirL19}).
We note, however, that the general upper bound of~\cite{DvirL19} on the rigidity of $n \times n$ circulant matrices does not apply to the setting of parameters considered in~\cite{CodenottiPR00} (because in~\cite{DvirL19} the upper bound is $n^{1+\eps}$ for a constant $\eps>0$, whereas in~\cite{CodenottiPR00} the rigidity is only claimed to be $\Omega(n \cdot \log^\eps n)$ for a constant $\eps>0$).

We next consider the behavior of the orthogonality dimension over the binary field of the complement of triangle-free graphs.
It is relevant to mention here that in the proof of Theorem~\ref{thm:Intro_Cycles}, the matrices that imply the stated bound on the minrank are symmetric (see Remark~\ref{remark:symmetric}). For the binary field, this can be combined with a matrix decomposition result due to Lempel~\cite{Lempel75} to obtain the following theorem, which answers a question of~\cite{CodenottiPR00} negatively.
\begin{theorem}\label{thm:IntroTriangleFree}
There exists a constant $\delta >0$ such that for every sufficiently large integer $n$ there exists a triangle-free $n$-vertex graph $G$ such that $\od(\overline{G},\Fset_2) \leq n^{1-\delta}$.
\end{theorem}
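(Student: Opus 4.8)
The plan is to reduce Theorem~\ref{thm:IntroTriangleFree} to Theorem~\ref{thm:Intro_Cycles} applied with $\ell=3$, by upgrading a bound on the minrank of a triangle-free graph over $\Fset_2$ into a bound on the orthogonality dimension of its complement over $\Fset_2$. Concretely, I would fix the constant $\delta=\delta(3)>0$ and the triangle-free $n$-vertex graph $G$ produced by Theorem~\ref{thm:Intro_Cycles}, so that $\minrank_{\Fset_2}(G)\le n^{1-\delta}$, and then exhibit an orthogonal representation of $\overline{G}$ over $\Fset_2$ of dimension essentially $\minrank_{\Fset_2}(G)$.

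The subtle point is that in general $\od(\overline{G},\Fset_2)$ can exceed $\minrank_{\Fset_2}(G)$, so the proof must exploit two features of the construction. First, by Remark~\ref{remark:symmetric} the matrix $M$ over $\Fset_2$ witnessing the bound on $\minrank_{\Fset_2}(G)$ can be taken to be symmetric; and second, since $M$ represents $G$, all of its diagonal entries are nonzero, i.e., equal to $1$ over $\Fset_2$, so $M$ is not an alternating matrix. The key step is then to invoke the matrix factorization theorem of Lempel~\cite{Lempel75}: a non-alternating symmetric matrix over $\Fset_2$ of rank $r$ can be written as $M=B\cdot B^T$ for a matrix $B$ over $\Fset_2$ with exactly $r$ columns. (Internally this is because the symmetric bilinear form defined by $M$ descends to a non-degenerate form of rank $r$ on the quotient of $\Fset_2^{\,n}$ by its radical; this quotient form is again non-alternating — the value $\langle x,x\rangle$ equals $\sum_i M_{i,i}x_i$, which is not identically zero and depends only on $x$ modulo the radical — and a non-degenerate non-alternating symmetric form over $\Fset_2$ admits an orthonormal basis. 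Even a weaker bound of $r+O(1)$ columns would suffice below, at the cost of replacing $\delta$ by a slightly smaller positive constant.) Applying this to $M$ yields $B\in\Fset_2^{\,n\times r}$ with $B\cdot B^T=M$ and $r=\rank_{\Fset_2}(M)=\minrank_{\Fset_2}(G)\le n^{1-\delta}$.

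From here the conclusion is routine. Letting $u_v\in\Fset_2^{\,r}$ denote the row of $B$ indexed by $v\in V(G)$, we have $\langle u_v,u_v\rangle=M_{v,v}=1\neq 0$ for every vertex $v$, and $\langle u_v,u_{v'}\rangle=M_{v,v'}=0$ whenever $v,v'$ are non-adjacent in $G$ — equivalently, adjacent in $\overline{G}$ — because $M$ represents $G$. Hence the vectors $(u_v)_{v\in V(G)}$ form an $r$-dimensional orthogonal representation of $\overline{G}$ over $\Fset_2$, so $\od(\overline{G},\Fset_2)\le r\le n^{1-\delta}$, and since $G$ is triangle-free this proves the theorem with the constant $\delta=\delta(3)$.

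I expect essentially all of the difficulty of this result to sit upstream, in Theorem~\ref{thm:minrk_Kneser} and hence Theorem~\ref{thm:Intro_Cycles}, namely in constructing the triangle-free generalized Kneser graph with small minrank and verifying that the representing matrix can be taken symmetric. The only genuinely new ingredient in the present step is the passage from symmetric minrank to orthogonality dimension over $\Fset_2$, and the sole (mild) obstacle there is the gap between these two parameters: one must ensure the representing matrix is non-alternating so that Lempel's factorization applies with no loss in dimension, which is automatic here because any matrix representing a graph has nonzero diagonal entries.
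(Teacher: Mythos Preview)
Your proposal is correct and follows exactly the paper's approach: apply Theorem~\ref{thm:Intro_Cycles} with $\ell=3$ to obtain a triangle-free graph $G$ together with a symmetric low-rank matrix over $\Fset_2$ representing it, then invoke Lempel's factorization $M=B\cdot B^T$ and read off an orthogonal representation of $\overline{G}$ from the rows of $B$. If anything, your write-up is slightly more explicit than the paper's in spelling out why Lempel's theorem applies (the matrix is non-alternating because its diagonal is all ones).
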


The above result can also be stated in terms of nearly orthogonal systems.
For a field $\Fset$, a system of vectors in $\Fset^m$ is said to be {\em nearly orthogonal} if every vector of the system is not self-orthogonal and any set of three of them contains an orthogonal pair.
For the real field, it was proved by Rosenfeld~\cite{Rosenfeld91} that every nearly orthogonal system in $\R^m$ has size at most $2m$.
Theorem~\ref{thm:IntroTriangleFree} shows that the situation is quite different over the binary field. Namely, it implies that there exists a constant $\delta>0$ such that for infinitely many integers $m$ there exists a nearly orthogonal system in $\Fset_2^m$ of size at least $m^{1+\delta}$.

We finally mention that our bound on the minrank parameter of generalized Kneser graphs can be used to obtain graphs with a constant {\em vector chromatic number} $\vchrom$ (see Definition~\ref{def:chi_v}) whose complement has a polynomially large minrank over every finite field.
\begin{theorem}\label{thm:Intro_chi_v}
There exists a constant $\delta >0$ such that for infinitely many integers $n$ there exists an $n$-vertex graph $G$ such that $\vchrom(G) \leq 3$ and yet $\minrank_{\Fset}(\overline{G}) \geq n^\delta$ for every finite field $\Fset$.
\end{theorem}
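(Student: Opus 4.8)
The plan is to take $G$ to be a generalized Kneser graph $K(d,s,m)$ — the graph whose vertices are the $s$-subsets of $[d]$, with $A$ and $B$ adjacent whenever $|A\cap B|<m$ — where the parameters $d$, $s$, $m$ are all of the same order of magnitude and are chosen so that two properties hold simultaneously: first, $G$ admits a vector $3$-coloring, so that $\vchrom(G)\le 3$; and second, $\minrank_\Fset(\overline{G})\ge n^\delta$ for a universal $\delta>0$ and every finite field $\Fset$, where $n=\binom{d}{s}$ is the number of vertices. Letting $s$ grow then produces infinitely many values of $n$. A concrete choice that I expect to work is $d=2s$ and $m=\ceil{s/4}$.

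For the first property, I would exhibit an explicit vector $3$-coloring built from centered characteristic vectors. To an $s$-set $A\subseteq[d]$ assign $x_A := (v_A-\tfrac{s}{d}\mathbf{1})/\sqrt{c}$, where $v_A\in\{0,1\}^d$ is the characteristic vector of $A$ and $c := \|v_A-\tfrac{s}{d}\mathbf{1}\|^2 = s(d-s)/d$ does not depend on $A$. Each $x_A$ is a unit vector, and a direct computation gives $\langle x_A,x_B\rangle = \tfrac{d\,|A\cap B|-s^2}{s(d-s)}$. If $A$ and $B$ are adjacent in $G$ then $|A\cap B|\le m-1$, so $\langle x_A,x_B\rangle \le \tfrac{d(m-1)-s^2}{s(d-s)}$, and this is at most $-\tfrac12$ exactly when $d(s+2m-2)\le 3s^2$; for $d=2s$ and $m=\ceil{s/4}$ this inequality holds (with room to spare), so the $x_A$'s form a vector $3$-coloring and $\vchrom(G)\le 3$. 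It is worth noting that the cheap bound $\vchrom(G)\le\chi_f(G)$ is useless here, since for these parameters the fractional chromatic number of $G$ is exponentially large; the geometric argument is essential.

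For the second property I would combine Theorem~\ref{thm:minrk_Kneser} with the elementary inequality $\minrank_\Fset(H)\cdot\minrank_\Fset(\overline{H})\ge |V(H)|$, valid for every graph $H$ and every field $\Fset$. Indeed, if $M$ represents $H$ and $N$ represents $\overline{H}$ with ranks equal to the respective minranks, then their Hadamard product $M\circ N$ has nonzero diagonal while every off-diagonal entry vanishes (for each pair of distinct vertices exactly one of $M$, $N$ is forced to be zero there), so $|V(H)| = \rank(M\circ N)\le \rank(M)\cdot\rank(N)$. Applying Theorem~\ref{thm:minrk_Kneser} to $H=G=K(d,s,m)$ bounds $\minrank_\Fset(G)$, uniformly over all finite fields $\Fset$, by a quantity of order $\binom{d}{\le s-m}$; for $d=2s$, $m=\ceil{s/4}$ one has $s-m<d/2$ and, by a routine entropy estimate, $\binom{d}{\le s-m}$ is exponentially smaller than $n=\binom{2s}{s}$, so $\minrank_\Fset(G)\le n^{1-\delta}$. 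Plugging into the displayed inequality with $H=G$ gives $\minrank_\Fset(\overline{G})\ge n/\minrank_\Fset(G)\ge n^\delta$, as required.

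The crux — and the step I expect to be most delicate — is the simultaneous choice of parameters, since the two requirements pull against each other. Ordinary Kneser graphs ($m=1$) do not work: there $\vchrom(K(d,s))\le d/s$ forces $d\le 3s$, but then the minrank bound of Theorem~\ref{thm:minrk_Kneser} is only $\binom{d}{\le s-1}$, which is comparable to $n=\binom{d}{s}$ up to a polynomial factor and hence yields nothing. One is therefore pushed into the regime $m=\Theta(s)$; writing $d=\beta s$ and $m=\alpha s$, the vector-$3$-coloring constraint becomes $\beta(1+2\alpha)\le 3$ while a useful minrank saving needs $\beta>2-\alpha$, and these are jointly feasible precisely when $\alpha<1/2$ — which is exactly why a choice such as $\alpha=1/4$, $\beta=2$ succeeds. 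A second point taken on trust from Theorem~\ref{thm:minrk_Kneser} is that its minrank bound holds over every finite field, including fields of small characteristic where the naive single-variable polynomial construction fails; this uniformity is precisely what that theorem supplies, so it can be invoked here as a black box.
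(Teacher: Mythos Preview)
Your proposal is correct and essentially identical to the paper's proof: the paper also takes $G=\Kneser{d}{d/2}{d/8}$ (your $d=2s$, $m=\lceil s/4\rceil$), bounds $\vchrom(G)\le 3$ via the same $\pm 1/\sqrt{d}$ vectors (your centered characteristic vectors coincide with these when $d=2s$), applies Theorem~\ref{thm:minrk_Kneser} and the entropy bound to get $\minrank_\Fset(G)\le n^{1-\delta}$ with $\delta<1-H(3/8)$, and then uses the inequality $\minrank_\Fset(G)\cdot\minrank_\Fset(\overline{G})\ge n$ (the paper's Fact~\ref{fact:minrank_comp}, which you reprove). Your closing discussion of the parameter trade-off is a nice addition but does not alter the route.
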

\noindent
The interest in such graphs comes from the semidefinite programming algorithmic approach applied in~\cite{ChlamtacH14} for approximating the minrank parameter. As explained in~\cite{Haviv18}, such graphs imply a limitation on this approach, which is based on the constant vector chromatic number of the complement of the instances. Theorem~\ref{thm:Intro_chi_v} improves on~\cite[Theorem~1.3]{Haviv18} where the bound on the minrank is shown only for sufficiently large finite fields.

\subsection{Outline}
The rest of the paper is organized as follows.
In Section~\ref{sec:GenODofKneser}, we prove our bounds on the generalized orthogonality dimension parameters of Kneser graphs (Theorems~\ref{thm:Intro_s=2} and~\ref{thm:Intro_general_s}) and derive our hardness result (Theorem~\ref{thm:IntroHardness}).
In Section~\ref{sec:ODofGenKneser}, we prove our bound on the minrank parameter over finite fields of generalized Kneser graphs and deduce Theorems~\ref{thm:Intro_Cycles},~\ref{thm:IntroTriangleFree}, and~\ref{thm:Intro_chi_v}.

\section{The Generalized Orthogonality Dimension of Kneser Graphs}\label{sec:GenODofKneser}

In this section we study the generalized orthogonality dimension parameters of Kneser graphs, namely, the quantities $\od_k(K(d,s))$ (recall Definitions~\ref{def:Kneser} and~\ref{def:ortho_k-subspace}), and prove Theorems~\ref{thm:Intro_s=2} and~\ref{thm:Intro_general_s}.
We start with a linear algebra lemma that will be useful in our proofs.

\subsection{Linear Algebra Lemma}

\begin{lemma}\label{lem_cor:good_subspace}
Let $U$ be a subspace of $\R^t$ with $\dim(U) = \ell$, let $\calW$ be a finite collection of subspaces of $\R^t$, and let $\ell' \leq \ell$ be an integer satisfying $\dim(U \cap W) \leq \ell'$ for every $W \in \calW$. Then, there exists a subspace~$U'$ of $U$ with $\dim(U')=\ell-\ell'$ such that $\dim(U' \cap W) = 0$ for every $W \in \calW$.
\end{lemma}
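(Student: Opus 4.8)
The plan is to construct $U'$ greedily, choosing a basis one vector at a time while maintaining the invariant that the span of the vectors chosen so far meets every $W \in \calW$ only at the origin. The fact that makes this work is a simple dimension count: once $i-1 \le \ell-\ell'-1$ vectors have been chosen, spanning a subspace $V_{i-1} \subseteq U$, then for each $W \in \calW$ the subspace $(W \cap U) + V_{i-1}$ of $U$ has dimension at most $\ell' + (i-1) \le \ell-1$, so it is a \emph{proper} subspace of $U$; since $\R$ is infinite, $U$ cannot be covered by finitely many of its proper subspaces, so there is room to pick the next vector outside all of them.

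Concretely, if $\calW = \emptyset$ the statement is trivial (take any $(\ell-\ell')$-dimensional subspace of $U$), so assume $\calW \neq \emptyset$ and set $m = \ell-\ell'$. I would build $v_1,\dots,v_m \in U$ inductively: given $v_1,\dots,v_{i-1}$ with $V_{i-1} := \linspan(v_1,\dots,v_{i-1})$ satisfying $V_{i-1} \cap W = \{0\}$ for all $W \in \calW$, choose
\[
v_i \in U \setminus \bigcup_{W \in \calW} \big( (W \cap U) + V_{i-1} \big),
\]
which is possible by the dimension count above. To see that the invariant is preserved, suppose for contradiction that some $W \in \calW$ has a nonzero $x = \sum_{j=1}^{i} a_j v_j \in V_i \cap W$. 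Then $a_i \neq 0$, since otherwise $x \in V_{i-1} \cap W = \{0\}$; solving for $v_i$ and using $x \in W \cap U$ and $V_{i-1} \subseteq U$ gives $v_i \in (W \cap U) + V_{i-1}$, contradicting the choice of $v_i$. In particular (taking any $W_0 \in \calW$, for which $V_{i-1} \subseteq (W_0 \cap U) + V_{i-1}$) each $v_i \notin V_{i-1}$, so $v_1,\dots,v_m$ are linearly independent and $U' := V_m$ is an $(\ell-\ell')$-dimensional subspace of $U$ with $\dim(U' \cap W) = 0$ for every $W \in \calW$, as desired.

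The only point requiring care — and the step I would double-check most carefully — is the bookkeeping ensuring that $(W \cap U) + V_{i-1}$ stays a proper subspace of $U$ through all $m = \ell - \ell'$ steps; this is exactly where the hypothesis $\dim(U \cap W) \le \ell'$ enters, and it leaves precisely one dimension of slack at the final step. One could alternatively argue by genericity — the set of $(\ell-\ell')$-dimensional subspaces of $U$ meeting a fixed $W$ nontrivially is a proper subvariety of the Grassmannian, and finitely many such subvarieties cannot cover it — but the greedy construction above is entirely elementary and self-contained, so that is the route I would take.
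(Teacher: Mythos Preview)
Your proof is correct and follows essentially the same approach as the paper: build $U'$ greedily one vector at a time, using the fact that finitely many proper subspaces cannot cover a real vector space, with the hypothesis $\dim(U\cap W)\le \ell'$ ensuring each $(W\cap U)+V_{i-1}$ stays proper through all $\ell-\ell'$ steps. The paper routes the argument through a slightly more general intermediate lemma (controlling $\dim(U'\cap W)$ exactly rather than just bounding it by zero), but the underlying construction and dimension count are identical to yours.
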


Intuitively, given a subspace $U$ and a collection $\calW$ as in the lemma, a `random-like' subspace $U'$ of $U$ with dimension $\ell-\ell'$ is expected to have a trivial intersection with each of the subspaces of $\calW$, and thus to satisfy the assertion of the lemma.
To prove it formally, we use the following well-known fact.
\begin{fact}\label{fact:covering_by_subspaces}
Let $U$ be a subspace of $\R^t$, and let $\calW$ be a finite collection of proper subspaces of $U$.
Then, $\calW$ does not cover $U$, that is, there exists a vector $u \in U$ such that $u \notin W$ for every $W \in \calW$.
\end{fact}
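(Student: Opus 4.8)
The plan is to prove, by contradiction, the standard fact that a vector space over an infinite field is not a finite union of proper subspaces; the key mechanism is a pigeonhole argument along an affine line, which exploits that $\R$ is infinite. Suppose some finite collection of proper subspaces of $U$ covers $U$, and among all such collections fix one of minimum cardinality, say $\calW = \{W_1, \dots, W_m\}$. Since $0 \in U$, the empty collection does not cover $U$, and a single proper subspace cannot cover $U$ either, so $m \geq 2$. By minimality of $\calW$, the sub-collection $\{W_2, \dots, W_m\}$ does not cover $U$, so we may fix a vector $u \in U$ outside $\bigcup_{i=2}^m W_i$; since $\calW$ covers $U$, it follows that $u \in W_1$, and moreover $u \neq 0$ because $0 \in W_2$. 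Finally, since $W_1 \subsetneq U$, fix a vector $v \in U \setminus W_1$.

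Next I would consider the affine line $L = \{\, v + t u \;:\; t \in \R \,\} \subseteq U$. I claim $L \cap W_1 = \emptyset$: if $v + t u \in W_1$ for some $t \in \R$, then $v = (v + t u) - t u \in W_1$ since $u \in W_1$, contradicting the choice of $v$. Hence $L \subseteq \bigcup_{i=2}^m W_i$. On the other hand, each $W_j$ with $2 \leq j \leq m$ contains at most one point of $L$: if $v + t_1 u$ and $v + t_2 u$ both lie in $W_j$ with $t_1 \neq t_2$, then $(t_1 - t_2) u \in W_j$ and so $u \in W_j$, contradicting the choice of $u$. Therefore $\bigcup_{i=2}^m W_i$ contains at most $m - 1$ points of $L$. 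But $u \neq 0$, so the map $t \mapsto v + t u$ is injective, and since $\R$ is infinite, $L$ is an infinite set --- contradicting $L \subseteq \bigcup_{i=2}^m W_i$. This contradiction shows that no finite collection of proper subspaces of $U$ covers $U$; equivalently, there exists $u \in U$ with $u \notin W$ for every $W \in \calW$, which is the claim.

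The argument is elementary and there is no genuine obstacle; the only things worth a second glance are the degenerate bookkeeping (when $\calW = \emptyset$ the statement is vacuous, and one should check that a minimal counterexample has at least two members) and the remark $u \neq 0$, which is precisely what makes $L$ infinite. The same proof works verbatim over any infinite field. An alternative, even shorter route identifies $U$ with $\R^{\ell}$ and observes that each proper subspace has Lebesgue measure zero (or is the zero set of a nonzero linear functional, so the product of these functionals is a nonzero polynomial on $\R^{\ell}$, which cannot vanish identically); I would nonetheless favor the line argument, since it uses nothing beyond the infinitude of the scalar field.
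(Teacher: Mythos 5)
Your proof is correct, and it takes a genuinely different route from the paper's. The paper reduces to $U=\R^t$, enlarges each member of $\calW$ to a hyperplane, and then exhibits a single explicit curve --- the moment curve $\{(1,\alpha,\alpha^2,\ldots,\alpha^{t-1}) : \alpha\in\R\}$ --- that meets each hyperplane in at most $t-1$ points (roots of a nonzero polynomial of degree at most $t-1$), so the finite union cannot cover it. You instead run the classical minimal-counterexample argument: pick $u$ covered only by $W_1$, pick $v\notin W_1$, and observe that the affine line $v+\R u$ misses $W_1$ entirely and meets each remaining $W_j$ in at most one point, contradicting its infinitude. Your bookkeeping is sound (the cases $\calW=\emptyset$ and $m=1$ are handled, $u\neq 0$ is noted, and $L\subseteq U$ since $u,v\in U$). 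The trade-offs are mild: your argument needs no reduction to hyperplanes or to the ambient space and isolates exactly the hypothesis used (infinitude of the scalar field), while the paper's moment-curve argument is non-inductive and produces one fixed witness set that simultaneously dodges every proper subspace, which some readers find more transparent; both proofs work verbatim over any infinite field, as you observe.
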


\begin{proof}
We may assume without loss of generality that $U = \R^t$ (otherwise apply to $U$ and to the subspaces of $\calW$ an invertible linear transformation from $U$ to $\R^{\dim(U)}$) and that $\dim(W)=t-1$ for every $W \in \calW$ (otherwise replace $W$ with an arbitrary subspace of dimension $t-1$ that contains it).
Consider the set $S = \{ (1,\alpha,\alpha^2,\ldots,\alpha^{t-1}) \mid \alpha \in \R \} \subseteq \R^t$.
Every subspace $W \in \calW$ consists of all the points $x \in \R^t$ satisfying a certain linear equation $\sum_{i=1}^{t}{a_ix_i} = 0$, so the intersection $S \cap W$ is the set of all points $(1,\alpha,\alpha^2,\ldots,\alpha^{t-1})$ where $\alpha$ satisfies $\sum_{i=1}^{t}{a_i \alpha^{i-1}} = 0$. Since a polynomial of degree $t-1$ has at most $t-1$ zeros, it follows that $S \cap W$ is finite for every $W$, hence the union of the subspaces of the finite collection $\calW$ intersects $S$ at a finite number of points. This implies that $\calW$ does not cover $S$, and in particular, it does not cover the entire space $\R^t$.
\end{proof}

We use Fact~\ref{fact:covering_by_subspaces} to prove the following lemma.

\begin{lemma}\label{lemma:good_subspace}
Let $U$ be a subspace of $\R^t$ with $\dim(U) = \ell$, and let $\calW$ be a finite collection of subspaces of $U$.
Then for every integer $\ell' \leq \ell$ there exists a subspace $U'$ of $U$ with $\dim(U')=\ell'$ such that for every $W \in \calW$ it holds that \[\dim(U' \cap W) = \max \big ( 0,\dim(W)+\ell'-\ell \big ).\]
\end{lemma}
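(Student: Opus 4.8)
The plan is to build $U'$ greedily, one basis vector at a time, maintaining a dimension invariant that forces each intersection $U'\cap W$ to be as small as the dimension formula permits. To see why such an invariant is the right target, note first the ``easy'' inequality: for any subspace $U'$ of $U$ and any $W\in\calW$ we have $\dim(U'\cap W)=\dim(U')+\dim(W)-\dim(U'+W)$, and since $U'+W\subseteq U$ this gives $\dim(U'\cap W)\ge \max\big(0,\dim(W)+\ell'-\ell\big)$. Hence it suffices to produce an $\ell'$-dimensional $U'\subseteq U$ for which $U'+W$ is \emph{as large as possible} for every $W\in\calW$, namely $\dim(U'+W)=\min(\ell,\ell'+\dim(W))$.

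To do this I would construct a chain $\{0\}=U_0\subsetneq U_1\subsetneq\cdots\subsetneq U_{\ell'}=:U'$ with $\dim(U_j)=j$, keeping the invariant that $\dim(U_j+W)=\min(\ell,j+\dim(W))$ for every $W\in\calW$. This holds trivially at $j=0$. For the inductive step, given $U_{j-1}$, consider the finite family of subspaces of $U$ consisting of $U_{j-1}$ together with all spaces $U_{j-1}+W$ for those $W\in\calW$ with $(j-1)+\dim(W)<\ell$; by the invariant (and since $j-1<\ell$) each of these has dimension strictly less than $\ell$, so each is a \emph{proper} subspace of $U$. By Fact~\ref{fact:covering_by_subspaces} there is a vector $u_j\in U$ lying in none of them, and I set $U_j=U_{j-1}+\linspan(u_j)$. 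Since $u_j\notin U_{j-1}$ we have $\dim(U_j)=j$. For $W$ with $(j-1)+\dim(W)<\ell$ — equivalently $j+\dim(W)\le\ell$ — the fact that $u_j\notin U_{j-1}+W$ raises the dimension of that sum by exactly one, so $\dim(U_j+W)=j+\dim(W)=\min(\ell,j+\dim(W))$; for the remaining $W$ we already had $U_{j-1}+W=U$, hence $U_j+W=U$, and the invariant persists.

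Applying the dimension formula to $U'=U_{\ell'}$ and each $W\in\calW$ then gives $\dim(U'\cap W)=\ell'+\dim(W)-\min(\ell,\ell'+\dim(W))=\max\big(0,\dim(W)+\ell'-\ell\big)$, as desired. The only place that requires genuine care is checking that the subspaces forbidden at each step really are proper subspaces of $U$, so that Fact~\ref{fact:covering_by_subspaces} applies — and this is precisely where the invariant is used. Beyond that verification the argument is a routine induction, so I do not expect a substantive obstacle.
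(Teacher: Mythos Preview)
Your proof is correct and follows essentially the same approach as the paper's: both build $U'$ one dimension at a time by induction, at each step invoking Fact~\ref{fact:covering_by_subspaces} to pick a vector outside the finitely many proper subspaces $U_{j-1}+W$ (and $U_{j-1}$ itself). The only cosmetic difference is that you phrase the invariant as $\dim(U_j+W)=\min(\ell,j+\dim W)$, whereas the paper tracks the equivalent inequality $\dim(U_j\cap W)\le\max(0,\dim W+j-\ell)$ and handles the two cases ($U_{j-1}+W=U$ versus proper) in the verification rather than in the choice of which $W$'s to avoid; the underlying argument is the same.
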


\begin{proof}
Let $U$ and $\calW$ be as in the statement of the lemma. It suffices to show that for every $\ell' \leq \ell$ there exists a subspace $U'$ of $U$ with $\dim(U')=\ell'$ such that for every $W \in \calW$ it holds that
\begin{equation}\label{eq:U'W_upper}
\dim(U' \cap W) \leq \max \big ( 0, \dim(W)+\ell'-\ell \big ).
\end{equation}
Indeed, such a subspace $U'$ also satisfies for every subspace $W \in \calW$ that $\dim(U' \cap W) \geq 0$ and that
\begin{eqnarray*}
\dim (U' \cap W) &=& \dim(U')+\dim(W)-\dim(U'+W) \geq \dim(U')+\dim(W)-\dim(U+W) \\
&=& \dim(U')+\dim(W)-\dim(U) = \dim(W) +\ell'-\ell,
\end{eqnarray*}
so the inequality in~\eqref{eq:U'W_upper} is in fact an equality.

We prove the existence of a subspace $U'$ satisfying~\eqref{eq:U'W_upper} by induction on $\ell'$.
For $\ell'=0$ the statement trivially holds for the choice $U' = \{0\}$.
Assume that for $1 \leq \ell' \leq \ell$ there exists a subspace $U''$ of $U$ with $\dim(U'')=\ell'-1$ such that for every $W \in \calW$ it holds that
\begin{eqnarray}\label{eq:induction_ell'}
\dim(U'' \cap W) \leq \max \big ( 0, \dim(W)+\ell'-\ell-1 \big ).
\end{eqnarray}
Consider the collection $\calW'$ of all the proper subspaces of $U$ in $\{U''+W \mid W \in \calW\}$.
By Fact~\ref{fact:covering_by_subspaces}, there exists a vector $u \in U$ that does not lie in any of the subspaces of $\calW'$. Define \[U' = U'' + \linspan(u).\]
Since $u \notin U''$ we have $\dim(U')=\dim(U'')+1=\ell'$.
Fix a subspace $W \in \calW$.
If $U''+W = U$ then
\[\dim(U'' \cap W) = \dim(U'')+\dim(W)-\dim(U''+W) = \dim(W) +\ell'-\ell-1,\]
and thus
\[\dim(U' \cap W) \leq \dim(U'' \cap W)+1 = \dim(W) +\ell'-\ell \leq \max \big ( 0, \dim(W) +\ell'-\ell \big ),\]
as required.
Otherwise, $U''+W$ is a proper subspace of $U$, hence $\dim(U''+W) \leq \ell-1$.
It follows that
\[\dim(U'' \cap W) = \dim(U'')+\dim(W)-\dim(U''+W) \geq \dim(W) +\ell'-\ell,\]
which using the induction hypothesis given in~\eqref{eq:induction_ell'} implies that $\dim(U'' \cap W)=0$.
Our choice of $u$ guarantees that $u \notin U''+W$, that is, $(u+U'') \cap W = \emptyset$, hence $\dim(U' \cap W)=\dim(U'' \cap W)=0$. In particular,
$\dim(U' \cap W) \leq \max \big (0, \dim(W) +\ell'-\ell \big )$, and we are done.
\end{proof}

Equipped with Lemma~\ref{lemma:good_subspace}, we are ready to derive Lemma~\ref{lem_cor:good_subspace}.

\begin{proof}[ of Lemma~\ref{lem_cor:good_subspace}]
Apply Lemma~\ref{lemma:good_subspace} to the subspace $U$ with the collection of its subspaces $\{ U \cap W \mid W \in \calW\}$. We obtain a subspace $U'$ of $U$ with $\dim(U')=\ell-\ell'$ such that for every $W \in \calW$ it holds that $\dim(U' \cap W) = \max \big ( 0,\dim(U \cap W)-\ell' \big ) = 0$, as desired.
\end{proof}

\subsection{The case $s=2$}

We turn to prove Theorem~\ref{thm:Intro_s=2} which determines the generalized orthogonality dimension parameters of Kneser graphs $K(d,s)$ for $s=2$.


\begin{proof}[ of Theorem~\ref{thm:Intro_s=2}]
Fix an integer $k \geq 1$.
For the upper bound, recall that for all integers $d \geq 4$ we have
\[\od_{k}(K(d,2),\R) \leq \chi_k(K(d,2)) = \Big \lceil \frac{k}{2} \Big \rceil \cdot (d-4)+2k.\]

For the lower bound, we consider the induced subgraph of $K(d,2)$, denoted by $K^{-}(d,2)$, obtained from $K(d,2)$ by removing one of its vertices, say, the vertex $\{1,2\}$. We turn to prove that for all integers $d \geq 4$ it holds that
\begin{eqnarray}\label{eq:K-,s=2}
\od_{k}(K^{-}(d,2),\R) \geq \Big \lceil \frac{k}{2} \Big \rceil \cdot (d-4)+2k,
\end{eqnarray}
which immediately implies the required lower bound on $\od_{k}(K(d,2),\R)$ as well.
To this end, we apply an induction on $d$.
For $d=4$, the graph $K(d,2)$ is a perfect matching on $6$ vertices, hence its subgraph $K^{-}(d,2)$ clearly contains an edge. Since every orthogonal $k$-subspace representation of this graph assigns to the vertices of this edge orthogonal $k$-subspaces it follows that \[\od_k(K^{-}(4,2),\R) \geq 2k,\]
as desired.
Now, fix some $d > 4$. Assuming that~\eqref{eq:K-,s=2} holds for $d-1$, we turn to prove it for $d$.

Recall that the vertex set $V$ of $K^{-}(d,2)$ consists of all the $2$-subsets of $[d]$ except $\{1,2\}$.
Let $(U_A)_{A \in V}$ be a $t$-dimensional orthogonal $k$-subspace representation of $K^{-}(d,2)$.
We proceed by considering the following two cases.

Assume first that there exists some $i \geq 4$ for which
\begin{eqnarray}\label{eq:U_{1,3}}
\dim(U_{\{1,3\}} \cap U_{\{1,i\}}) \geq \Big\lceil \frac{k}{2} \Big\rceil.
\end{eqnarray}
In this case, consider the induced subgraph of $K^{-}(d,2)$ on the vertex set $V'$ obtained from $V$ by removing the vertex $\{3,i\}$ and all the vertices that include the element $1$.
Notice that this subgraph is isomorphic to $K^{-}(d-1,2)$ and that every vertex of $V'$ is disjoint from either $\{1,3\}$ or from $\{1,i\}$ (or both).
This implies that the restriction $(U_A)_{A \in V'}$ of the given assignment to the vertices of $V'$ forms an orthogonal $k$-subspace representation of $K^{-}(d-1,2)$, all of whose subspaces lie in the subspace of $\R^t$ that is orthogonal to $U = U_{\{1,3\}} \cap U_{\{1,i\}}$.
By applying an orthogonal linear transformation from this subspace to $\R^{t-\dim(U)}$, we obtain that
\[ \od_k(K^{-}(d-1,2),\R) \leq t- \dim(U) \leq t- \Big\lceil \frac{k}{2} \Big\rceil,\]
where in the second inequality we have used~\eqref{eq:U_{1,3}}.
Using the induction hypothesis, this implies that
\[ t \geq \od_k(K^{-}(d-1,2),\R) + \Big\lceil \frac{k}{2} \Big\rceil \geq \Big \lceil \frac{k}{2} \Big \rceil \cdot (d-5)+2k + \Big\lceil \frac{k}{2} \Big\rceil = \Big \lceil \frac{k}{2} \Big \rceil \cdot (d-4)+2k,\]
and we are done.

We are left with the case where for every $i \geq 4$ it holds that
\[\dim(U_{\{1,3\}} \cap U_{\{1,i\}}) \leq \Big\lceil \frac{k}{2} \Big\rceil -1.\]
Apply Lemma~\ref{lem_cor:good_subspace} to the $k$-subspace $U_{\{1,3\}}$ and the collection $\{ U_{\{1,i\}} \mid 4 \leq i \leq d \}$.
It follows that there exists a subspace $U$ of $U_{\{1,3\}}$ with $\dim(U) = k- (\lceil \frac{k}{2} \rceil -1) \geq \lceil \frac{k}{2} \rceil$ such that for every $i \geq 4$ it holds that $\dim(U \cap U_{\{1,i\}}) = 0$.
Consider the induced subgraph of $K^{-}(d,2)$ on the vertex set $V'$ obtained from $V$ by removing the vertex $\{2,3\}$ and all the vertices that include the element $1$. As before, this subgraph is isomorphic to the graph $K^{-}(d-1,2)$.

We define an orthogonal $k$-subspace representation of this graph as follows.
Let $B$ be a set in $V'$.
If $3 \notin B$ we define $\widetilde{U}_B = U_B$.
Otherwise we have $B = \{3,i\}$ for some $i \geq 4$, and we let $\widetilde{U}_{\{3,i\}}$ be the projection of $U_{\{1,i\}}$ to the subspace of $\R^t$ that is orthogonal to $U$.
Note that the fact that $\dim(U \cap U_{\{1,i\}}) = 0$ guarantees that $\dim(\widetilde{U}_{\{3,i\}}) = \dim(U_{\{1,i\}}) = k$.

To prove that the assignment $(\widetilde{U}_B)_{B \in V'}$ forms an orthogonal $k$-subspace representation of the graph, let $B_1$ and $B_2$ be disjoint sets in $V'$.
If $3 \notin B_1 \cup B_2$ then we have $\widetilde{U}_{B_1} = U_{B_1}$ and $\widetilde{U}_{B_2} = U_{B_2}$, so it is clear that $\widetilde{U}_{B_1}$ and $\widetilde{U}_{B_2}$ are orthogonal.
Otherwise, assume without loss of generality that $B_1 = \{3,i\}$ for some $i \geq 4$ and that $3 \notin B_2$.
In this case we have $\widetilde{U}_{B_2} = U_{B_2}$, and since $B_2$ is disjoint from $B_1$ it is also disjoint from $\{1,i\}$ and from $\{1,3\}$, hence $\widetilde{U}_{B_2}$ is orthogonal to both $U_{\{1,i\}}$ and $U_{\{1,3\}}$ as well as to the projection $\widetilde{U}_{B_1}$ of $U_{\{1,i\}}$ to the subspace orthogonal to $U \subseteq U_{\{1,3\}}$.
We get that $\widetilde{U}_{B_1}$ and $\widetilde{U}_{B_2}$ are orthogonal, as required.

Finally, observe that all the subspaces $\widetilde{U}_{B}$ lie in the subspace of $\R^t$ that is orthogonal to $U$.
Indeed, for sets $B$ with $3 \in B$ this follows from the definition of $\widetilde{U}_{B}$, and for the other sets this holds because they are disjoint from $\{1,3\}$.
By applying an orthogonal linear transformation from this subspace to $\R^{t-\dim(U)}$, we obtain that
\[ \od_k(K^{-}(d-1,2),\R) \leq t- \dim(U) \leq t- \Big\lceil \frac{k}{2} \Big\rceil,\]
and as in the previous case, by the induction hypothesis it follows that
$t \geq \lceil \frac{k}{2} \rceil \cdot (d-4)+2k$, completing the proof.
\end{proof}

\subsection{General $s$}

We now prove Theorem~\ref{thm:Intro_general_s} which provides a lower bound on the generalized orthogonality dimension parameters of Kneser graphs $K(d,s)$ for $s \geq 3$.


\begin{proof}[ of Theorem~\ref{thm:Intro_general_s}]
Fix integers $k \geq s \geq 3$ and denote $m = \lceil \frac{k+1}{s} \rceil$.
Let $d_0=d_0(s,k)$ be a sufficiently large integer to be determined later.
We apply an induction on $d$.
To do so, we define $c=c(s,k)$ to be sufficiently large, say, $c=\frac{k-m+1}{s-1} \cdot (d_0+s-2)$, so that the statement of the theorem trivially holds for all integers $d \leq d_0+s-2$, and turn to prove the statement for $d \geq d_0$ assuming that it holds for $d-(s-1)$.

Let $(U_A)_{A \in V}$ be a $t$-dimensional orthogonal $k$-subspace representation of $K(d,s)$.
We start with some notation.
For an $s$-subset $A$ of $[d]$, an element $i \in A$, and an $s$-subset $B$ of $[d]$ satisfying $A \cap B = \{i\}$, we let $\calG_{A,i}(B)$ denote the collection that consists of the set $B$ and all the sets obtained from $B$ by replacing $i$ with some element from $A \setminus \{i\}$. Note that $|\calG_{A,i}(B)|=s$.
We say that a vertex $A$ of $K(d,s)$ is {\em good} (with respect to the given orthogonal subspace representation) if there exists an $i \in A$ such that for every vertex $B$ satisfying $A \cap B = \{i\}$ it holds that $\dim(U_A \cap U_C) \leq m-1$ for some $C \in \calG_{A,i}(B)$.

Assume first that there exists a good vertex $A$ in $K(d,s)$ associated with an element $i \in A$.
Applying Lemma~\ref{lem_cor:good_subspace}, we get that there exists a $(k-m+1)$-subspace $U$ of $U_A$ such that for every vertex $B$ satisfying $A \cap B = \{i\}$ it holds that $\dim(U \cap U_C) =0$ for some $C \in \calG_{A,i}(B)$.
We define an orthogonal $k$-subspace representation of the graph $K(d-(s-1),s)$ on the ground set $[d] \setminus (A \setminus \{i\})$ as follows. Let $B$ be an $s$-subset of $[d] \setminus (A \setminus \{i\})$.
If $i \notin B$ we define $\widetilde{U}_B = U_B$. Otherwise, we have $A \cap B = \{i\}$, and we let $\widetilde{U}_B$ be the projection of $U_C$ to the subspace of $\R^t$ orthogonal to $U$, where $C \in \calG_{A,i}(B)$ is a set satisfying $\dim(U \cap U_C) =0$. Note that this condition guarantees that $\dim(\widetilde{U}_B) = \dim(U_C) = k$.

We claim that the subspaces $\widetilde{U}_B$ form an orthogonal $k$-subspace representation of the graph $K(d-(s-1),s)$.
To see this, let $B_1$ and $B_2$ be disjoint $s$-subsets of $[d] \setminus (A \setminus \{i\})$.
If $i \notin B_1 \cup B_2$ then we have $\widetilde{U}_{B_1} = U_{B_1}$ and $\widetilde{U}_{B_2} = U_{B_2}$, so it is clear that $\widetilde{U}_{B_1}$ and $\widetilde{U}_{B_2}$ are orthogonal. Otherwise, assume without loss of generality that $i \in B_1$ and $i \notin B_2$. In this case, $\widetilde{U}_{B_2} = U_{B_2}$, and $\widetilde{U}_{B_1}$ is the projection of $U_C$ to the subspace of $\R^t$ orthogonal to $U$ for some $C \in \calG_{A,i}(B_1)$.
Since $B_2$ is disjoint from $A$, it follows that the subspace $\widetilde{U}_{B_2}$ is orthogonal to $U_A$ as well as to its subspace $U$.
It also follows that $B_2$ is disjoint from every set in $\calG_{A,i}(B_1)$, hence the subspace $\widetilde{U}_{B_2}$ is orthogonal to $U_C$. We get that $\widetilde{U}_{B_2}$ is orthogonal to $\widetilde{U}_{B_1}$, as required.

Now, observe that the above orthogonal $k$-subspace representation of $K(d-(s-1),s)$ lies in the subspace of $\R^t$ that is orthogonal to the $(k-m+1)$-subspace $U$.
Indeed, for sets $B$ with $i \in B$ this follows from the definition of $\widetilde{U}_{B}$, and for the other sets this holds because they are disjoint from $A$.
By applying an orthogonal linear transformation from this subspace to $\R^{t-\dim(U)}$, it follows that \[\od_{k}(K(d-(s-1),s),\R) \leq t-\dim(U) = t-(k-m+1).\]
Using the induction hypothesis, this implies that
\[t \geq \frac{k-m+1}{s-1} \cdot (d-(s-1)) -c + (k-m+1)= \frac{k-m+1}{s-1} \cdot d-c,\]
and we are done.

We are left with the case where no vertex of $K(d,s)$ is good, for which we need the following lemma.

\begin{lemma}\label{lemma:bad_vertex}
If a vertex $A$ of $K(d,s)$ is not good then there exists a nonzero vector $u_A \in U_A$ such that the number of vertices $D$ of $K(d,s)$ for which $U_D$ is not orthogonal to $u_A$ is at most
\[\binom{2s-1}{2} \cdot \binom{d-2}{s-2}.\]
\end{lemma}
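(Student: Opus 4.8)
The plan is to distill from the hypothesis that $A$ is not good a single explicit $(2s-1)$-element subset $S$ of $[d]$ together with a nonzero vector $u_A\in U_A$ such that every vertex $D$ with $U_D$ not orthogonal to $u_A$ satisfies $|D\cap S|\ge 2$; the stated bound then follows because the number of $s$-subsets $D$ of $[d]$ with $|D\cap S|\ge 2$ is at most $\binom{|S|}{2}\binom{d-2}{s-2}=\binom{2s-1}{2}\binom{d-2}{s-2}$ (choose the two elements of $D$ lying in $S$, then the remaining $s-2$ elements of $D$). To produce $S$, I would first unpack the definition: for a set $B$ with $A\cap B=\{i\}$, writing $T:=B\setminus\{i\}$ (an $(s-1)$-subset of $[d]\setminus A$), the collection $\calG_{A,i}(B)$ is exactly $\{T\cup\{j\}\mid j\in A\}$. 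Since $A$ is not good, instantiating the negation of goodness at any fixed $i\in A$ yields such a $B$, equivalently such a $T$, for which $\dim(U_A\cap U_{T\cup\{j\}})\ge m$ for every $j\in A$. I then take $S:=A\cup T$, which has size $2s-1$ since $A$ and $T$ are disjoint.

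Next I would use a dimension count. For $j\in A$ set $V_j:=U_A\cap U_{T\cup\{j\}}\subseteq U_A$, so $\dim(V_j)\ge m$. There are $s$ such subspaces and $s\cdot m\ge s\cdot\frac{k+1}{s}=k+1>k=\dim(U_A)$, so the linear map $\bigoplus_{j\in A}V_j\to U_A$ sending $(v_j)_j$ to $\sum_{j\in A}v_j$ is not injective; hence there exist vectors $v_j\in V_j$, not all zero, with $\sum_{j\in A}v_j=0$. Fix $j_0\in A$ with $v_{j_0}\ne 0$ and set $u_A:=v_{j_0}$. Then $u_A$ is a nonzero vector of $U_A$, and moreover $u_A=v_{j_0}\in V_{j_0}\subseteq U_{T\cup\{j_0\}}$, while simultaneously $u_A=-\sum_{j\in A\setminus\{j_0\}}v_j\in\sum_{j\in A\setminus\{j_0\}}U_{T\cup\{j\}}$. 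These two containments are what make $u_A$ work.

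It remains to check that any vertex $D$ with $U_D$ not orthogonal to $u_A$ has $|D\cap S|\ge 2$. First, $D\cap A\ne\emptyset$: otherwise $D$ is adjacent to $A$ in $K(d,s)$, so $U_D\perp U_A\ni u_A$. Second, $D\cap(T\cup\{j_0\})\ne\emptyset$: otherwise $D$ is adjacent to $T\cup\{j_0\}$, so $U_D\perp U_{T\cup\{j_0\}}\ni u_A$. Third, $D\cap(T\cup\{j\})\ne\emptyset$ for some $j\in A\setminus\{j_0\}$: otherwise $U_D$ is orthogonal to $U_{T\cup\{j\}}$ for all such $j$, hence to their sum, which contains $u_A$. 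Now split into cases: if $D\cap T\ne\emptyset$ then, as $D$ also meets $A$ and $A\cap T=\emptyset$, we get $|D\cap(A\cup T)|\ge 2$; if $D\cap T=\emptyset$ then the second and third points force $j_0\in D$ and $j\in D$ for some $j\in A\setminus\{j_0\}$, so $\{j_0,j\}\subseteq D\cap A$ and again $|D\cap(A\cup T)|\ge 2$. Combined with the counting estimate above, this proves the lemma.

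The step I expect to be the crux is the choice of $u_A$: instead of trying to place $u_A$ in an intersection of several of the $V_j$'s — which may be trivial, since for $s\ge 3$ the $s$ subspaces $V_j$ of dimension about $(k+1)/s$ in $\R^k$ need not pairwise intersect — one exploits a single nontrivial linear dependence among them, so that $u_A$ lies in one summand $U_{T\cup\{j_0\}}$ and in the span of all the others at once. Everything else (translating "not good" into the set $T$, the dimension count, and the two-case verification) is routine.
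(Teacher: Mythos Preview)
Your proof is correct and follows essentially the same approach as the paper: fix $i\in A$, use non-goodness to get $B$ (equivalently your $T=B\setminus\{i\}$) with $\dim(U_A\cap U_C)\ge m$ for all $C\in\calG_{A,i}(B)$, produce a nonzero $u_A\in U_A$ lying in one $U_C$ and in the sum of the others, and conclude that any offending $D$ must contain two elements of the $(2s-1)$-set $A\cup B$. The only cosmetic difference is in the construction of $u_A$: the paper orders the $C_j$'s, looks at the partial sums $W_j=V_1+\cdots+V_j$, finds the first $j$ with $\dim(W_{j+1})-\dim(W_j)<m$, and takes $u_A\in V_{j+1}\cap W_j$; you instead use the global dimension count $\sum_j\dim V_j\ge sm>k$ to get a nontrivial dependence $\sum_j v_j=0$ directly. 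Both are the same pigeonhole in slightly different clothing, and the downstream case analysis is identical.
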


We first show how Lemma~\ref{lemma:bad_vertex} completes the proof of the theorem.
Assume that no vertex of $K(d,s)$ is good, and consider the  following process: We start with the entire vertex set of $K(d,s)$, and in every iteration we choose an arbitrary vertex $A$ associated with its nonzero vector $u_A \in U_A$ from Lemma~\ref{lemma:bad_vertex} and eliminate all vertices whose subspaces are not orthogonal to $u_A$. The nonzero vectors associated with the chosen vertices are clearly pairwise orthogonal, and their number, just like the number of iterations in the process, is at least
\[\frac{\binom{d}{s}}{\binom{2s-1}{2} \cdot \binom{d-2}{s-2}} \geq \frac{k-m+1}{s-1} \cdot d-c,\]
where the inequality holds for every $d \geq d_0$ assuming that $d_0 = d_0(s,k)$ is sufficiently large (because the left-hand side of the inequality is quadratic in $d$ whereas the right-hand side is linear in $d$).
However, the size of the obtained orthogonal set cannot exceed the dimension $t$, hence
\[t \geq \frac{k-m+1}{s-1} \cdot d-c,\]
and we are done. It remains to prove Lemma~\ref{lemma:bad_vertex}.

\begin{proof}[ of Lemma~\ref{lemma:bad_vertex}]
Assume that $A$ is not a good vertex of $K(d,s)$ and fix an arbitrary $i \in A$. Then there exists a vertex $B$ satisfying $A \cap B = \{i\}$ such that $\dim(U_A \cap U_C) \geq m$ for every vertex $C \in \calG_{A,i}(B)$.
Denote $\calG_{A,i}(B) = \{C_1,\ldots,C_s\}$, and recall that every set $C_j$ intersects $A$ at one distinct element. For every $j \in [s]$ define $V_j = U_A \cap U_{C_j}$ and $W_j = V_1 + \cdots + V_j$. Note that
\[W_1 \subseteq W_2 \subseteq \cdots \subseteq W_s \subseteq U_A.\]
Since $\dim(W_1) = \dim(V_1) \geq m$ and $\dim(U_A)= k < m \cdot s$, there must exist some $j \in [s-1]$ for which
\begin{eqnarray}\label{eq:<ell}
\dim(W_{j+1}) - \dim(W_j) < m.
\end{eqnarray}
For this $j$, we have
\begin{eqnarray*}
\dim(V_{j+1} \cap W_j) &=& \dim(V_{j+1})+ \dim(W_j)- \dim(V_{j+1}+W_j)
\\ &=& \dim(V_{j+1})+ \dim(W_j) - \dim(W_{j+1}) > m-m=0,
\end{eqnarray*}
where the inequality follows by combining~\eqref{eq:<ell} with the fact that $\dim(V_{j+1}) \geq m$.
This implies that there exists a nonzero vector $u_A$ in $V_{j+1} \cap W_j$.
Observe that
\[u_A \in U_A \cap U_{C_{j+1}} \cap (U_{C_1} +\cdots+U_{C_j}).\]
Now, consider a vertex $D$ of $K(d,s)$ whose subspace $U_D$ is not orthogonal to $u_A$.
It follows that $U_D$ is not orthogonal to $U_A$, to $U_{C_{j+1}}$, and to $U_{C_1} +\cdots+U_{C_j}$, hence $D$ intersects the sets $A$ and $C_{j+1}$ as well as at least one of the sets $C_1,\ldots,C_{j}$.
We claim that $D$ must include at least two elements from $A \cup B$.
Indeed, $D$ intersects $A$ but if $D$ includes from $A \cup B$ only one element and this element belongs to $A$ then $D$ either does not intersect $C_{j+1}$ or does not intersect any of $C_1,\ldots,C_{j}$.
It follows that the number of vertices $D$ for which $U_D$ is not orthogonal to $u_A$ is bounded from above by the number of $s$-subsets of $[d]$ that include at least two elements from the $2s-1$ elements of $A \cup B$. The latter is at most $\binom{2s-1}{2} \cdot \binom{d-2}{s-2}$, as required.
\end{proof}
The proof is completed.
\end{proof}

As immediate corollaries of Theorem~\ref{thm:Intro_general_s}, we obtain the following.
\begin{corollary}\label{cor:ls-1}
For every integers $s \geq 3$ and $\ell \geq 2$ there exists $c=c(s,\ell)$ such that for all integers $d \geq 2s$,
\[\od_{\ell \cdot s-1}(K(d,s),\R) \geq \ell \cdot d-c.\]
\end{corollary}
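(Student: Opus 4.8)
The plan is to obtain Corollary~\ref{cor:ls-1} by a direct instantiation of Theorem~\ref{thm:Intro_general_s} with the parameter $k = \ell \cdot s - 1$, followed by a short verification that the coefficient of $d$ in the resulting bound collapses to exactly $\ell$. First I would check that the hypothesis $k \geq s$ of Theorem~\ref{thm:Intro_general_s} is satisfied for this choice: since $\ell \geq 2$ and $s \geq 3$, we have $k = \ell s - 1 \geq 2s - 1 \geq s$, so the theorem applies and produces a constant $c = c(s, \ell s - 1)$, which depends only on $s$ and $\ell$.

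Next I would evaluate the quantity $m = \lceil \frac{k+1}{s} \rceil$ that governs the bound. With $k = \ell s - 1$ we get $k + 1 = \ell s$, which is divisible by $s$, so $m = \ell$ exactly. Plugging this into the coefficient of $d$ from Theorem~\ref{thm:Intro_general_s} gives
\[
\frac{k - m + 1}{s - 1} = \frac{(\ell s - 1) - \ell + 1}{s - 1} = \frac{\ell (s - 1)}{s - 1} = \ell,
\]
and hence $\od_{\ell \cdot s - 1}(K(d,s), \R) \geq \ell \cdot d - c$ for all integers $d \geq 2s$, which is precisely the claimed inequality.

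There is no real obstacle here; the only points requiring any care are the exactness of the ceiling computation (which uses that $s$ divides $\ell s$, so no rounding loss occurs) and the check that the parameter range $k \geq s \geq 3$ of Theorem~\ref{thm:Intro_general_s} is respected. Both are routine, so the corollary follows immediately.
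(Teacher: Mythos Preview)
Your proposal is correct and matches the paper's approach exactly: the paper presents Corollary~\ref{cor:ls-1} as an immediate consequence of Theorem~\ref{thm:Intro_general_s}, and the only computation needed is the one you carry out, namely that for $k=\ell s-1$ one has $\lceil (k+1)/s\rceil=\ell$ and hence $(k-\lceil (k+1)/s\rceil+1)/(s-1)=\ell$. Your verification of the hypothesis $k\geq s\geq 3$ and the dependence of the constant on $(s,\ell)$ alone is also correct.
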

\noindent
As mentioned before, the bound given in Corollary~\ref{cor:ls-1} is tight up to the additive constant $c$.

\begin{corollary}\label{cor:3/2}
There exists an absolute constant $c$ such that for all integers $d \geq 6$,
\[\od_{4}(K(d,3),\R) \geq 3d/2-c.\]
\end{corollary}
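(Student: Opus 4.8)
The plan is to obtain this as a direct specialization of Theorem~\ref{thm:Intro_general_s}. First I would check the hypotheses: taking $k = 4$ and $s = 3$, we have $k \geq s \geq 3$, so the theorem applies. Next I would evaluate the coefficient appearing in its lower bound. Since $\lceil \frac{k+1}{s} \rceil = \lceil \frac{5}{3} \rceil = 2$, we get
\[
\frac{k - \lceil \frac{k+1}{s} \rceil + 1}{s-1} = \frac{4 - 2 + 1}{3 - 1} = \frac{3}{2}.
\]
Theorem~\ref{thm:Intro_general_s} therefore yields a constant $c = c(3,4)$, which is an absolute constant once $s$ and $k$ are fixed, such that $\od_4(K(d,3),\R) \geq \frac{3}{2} d - c$ for every integer $d \geq 2s = 6$. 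This is precisely the assertion of the corollary, since the range $d \geq 6$ is exactly $d \geq 2s$.

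There is essentially no obstacle to overcome here: all the content resides in the proof of Theorem~\ref{thm:Intro_general_s}, whose hard part was the dichotomy between good and bad vertices of $K(d,s)$ together with the dimension-counting argument behind Lemma~\ref{lemma:bad_vertex}. The only points to be careful about are the arithmetic of the ceiling (checking that $\lceil 5/3 \rceil = 2$ rather than $1$) and verifying that the chosen parameters satisfy $k \geq s$, which is what guarantees $k < m \cdot s$ in that proof (here $m = 2$ and $k = 4 < 6 = m \cdot s$, so the required slot $j \in [s-1]$ with a small dimension jump in the filtration $W_1 \subseteq \cdots \subseteq W_s$ indeed exists). I would also note, although it is not needed for the corollary itself, that this is exactly the input required by Proposition~\ref{prop:xi3vs4}, which combined with the known equality $\od_3(K(d,3),\R) = d$ yields the hardness statement of Theorem~\ref{thm:IntroHardness}.
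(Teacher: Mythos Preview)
Your proposal is correct and matches the paper's approach exactly: the paper states Corollary~\ref{cor:3/2} as an immediate corollary of Theorem~\ref{thm:Intro_general_s} without giving a separate proof, and your arithmetic verification of the coefficient $\frac{k-\lceil (k+1)/s\rceil+1}{s-1}=\frac{3}{2}$ for $k=4$, $s=3$ is precisely what is implicitly required.
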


Equipped with Corollary~\ref{cor:3/2}, we are ready to deduce Theorem~\ref{thm:IntroHardness}.

\begin{proof}[ of Theorem~\ref{thm:IntroHardness}]
Let $t$ be a sufficiently large integer. Recall that a result of~\cite{BukhC18} implies that $\od_3(K(t,3),\R) = t$, whereas Corollary~\ref{cor:3/2} implies that $\od_{4}(K(t,3),\R) \geq 3t/2-c$ for an absolute constant $c$.
Applying Proposition~\ref{prop:xi3vs4} with $F = K(t,3)$, it follows that it is $\NP$-hard to decide whether an input graph $G$ satisfies $\od(G,\R) \leq t$ or $\od(G,\R) \geq 3t/2-c$, as desired.
\end{proof}

\section{The Minrank of Generalized Kneser Graphs}\label{sec:ODofGenKneser}

In this section we consider a generalization of the family of Kneser graphs, defined as follows.


\begin{definition}[Generalized Kneser Graphs]\label{def:GenKneser}
For integers $m \leq s \leq d$, the {\em generalized Kneser graph} $\Kneser{d}{s}{m}$ is the graph whose vertices are all the $s$-subsets of $[d]$, where two sets $A,B$ are adjacent if $|A \cap B| < m$.
\end{definition}

For this family of graphs, we prove the following upper bound on the minrank parameter over finite fields (recall Definition~\ref{def:minrank}).

\begin{theorem}\label{thm:minrk_Kneser}
For all integers $m \leq s \leq d$ and for every finite field $\Fset$,
\[{\minrank}_{\Fset}(\Kneser{d}{s}{m}) \leq \sum_{i=0}^{s-m}{d \choose i}.\]
Moreover, the bound on the minrank can be achieved by a symmetric matrix.
\end{theorem}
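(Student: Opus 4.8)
The plan is to construct an explicit symmetric matrix $M$ over $\Fset$ representing $\Kneser{d}{s}{m}$ whose rank is bounded by $\sum_{i=0}^{s-m}\binom{d}{i}$, using the polynomial method in the spirit of Alon~\cite{AlonUnion98} and Haviv~\cite{Haviv18free}. Index the rows and columns of $M$ by the $s$-subsets of $[d]$, identified with their $0/1$ characteristic vectors $x_A\in\{0,1\}^d$. For two such sets $A,B$ we have $\langle x_A,x_B\rangle = |A\cap B|$, which equals $s$ when $A=B$ and lies in $\{0,1,\ldots,s\}$ in general; two vertices are \emph{non-adjacent} precisely when $|A\cap B|\ge m$, i.e. when $|A\cap B|\in\{m,m+1,\ldots,s\}$. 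So I want a polynomial $P:\Fset^d\times\Fset^d\to\Fset$ (or rather a univariate polynomial $p$ applied to $\langle x_A,x_B\rangle$) such that $p(|A\cap B|)\ne 0$ when $A=B$ and $p(|A\cap B|)=0$ whenever $|A\cap B|<m$ but $|A\cap B|$ could be anything in $\{m,\ldots,s-1\}$ on the diagonal-free off-diagonal \dots; the cleanest choice is a polynomial vanishing exactly on $\{0,1,\ldots,m-1\}$ and nonzero on $\{m,\ldots,s\}$, for instance $p(z)=\prod_{j=0}^{m-1}(z-j)$, which has degree $m$. Then set $M_{A,B}=p(\langle x_A,x_B\rangle)$; this is symmetric, has nonzero diagonal entries $p(s)=\prod_{j=0}^{m-1}(s-j)\ne 0$ (note $s\ge m$, and we must check this product is nonzero in $\Fset$ — if $\Fset$ has small characteristic this needs care, see below), and $M_{A,B}=0$ whenever $|A\cap B|<m$, i.e. $M$ represents $\Kneser{d}{s}{m}$.

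The core step is the rank bound. Expand $p(\langle x,y\rangle)$ as a polynomial in the $2d$ variables $x_1,\ldots,x_d,y_1,\ldots,y_d$; since $\langle x,y\rangle=\sum_i x_iy_i$ has degree $2$ (degree $1$ in the $x$'s, degree $1$ in the $y$'s) and $p$ has degree $m$, the expansion $p(\langle x,y\rangle)=\sum_\alpha q_\alpha(x)\cdot r_\alpha(y)$ writes $M_{A,B}$ as a sum of products of a function of $A$ and a function of $B$. Because we evaluate only at $0/1$ vectors, we may reduce every monomial modulo $x_i^2=x_i$ (and likewise for $y$), so each $q_\alpha$ is multilinear of degree at most $m$; and since two disjoint parts of the argument pair up, the degree in $x$ and the degree in $y$ of each surviving term are equal. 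Hence each term contributing to $M_{A,B}$ has the form $\bigl(\prod_{i\in S}(x_A)_i\bigr)\bigl(\prod_{i\in S}(x_B)_i\bigr)$ for a set $S\subseteq[d]$ with $|S|\le m$; wait — I should recheck the degree bookkeeping, since $\langle x,y\rangle^k$ expands into $\sum$ over multi-indices and the $x$-degree equals the $y$-degree equals $k\le m$, so over $0/1$ inputs the relevant monomials are indexed by multisets of size $\le m$, equivalently (after multilinear reduction) by subsets $S$ of size... this is where the count $\sum_{i=0}^{m}\binom{d}{i}$ would appear, but the theorem claims $\sum_{i=0}^{s-m}\binom{d}{i}$. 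So the right polynomial is \emph{not} $\prod_{j=0}^{m-1}(z-j)$ of degree $m$, but rather something of degree $s-m$: one should instead work with the \emph{complementary} intersection. Replace each set $A$ by its complement or, better, note $|A\cap B| = s - |A\setminus B|$ and for non-adjacency $|A\cap B|\ge m$ iff $|A\setminus B|\le s-m$; using a polynomial $\tilde p$ of degree $s-m$ vanishing on $\{s-m+1,\ldots,s\}$... Let me restate: take $\tilde p(z)=\prod_{j=s-m+1}^{s}(z-j)$, degree $m$ again — hmm. The correct device (following Alon/Haviv) is to write $M_{A,B}$ as a function of $|\overline A\cap B|$ or to use the "inclusion" polynomial of degree $s-m$; concretely, represent vertex $A$ by a vector in a space of dimension $\sum_{i=0}^{s-m}\binom{d}{i}$ whose coordinates are indexed by subsets $T\subseteq[d]$ with $|T|\le s-m$, with $T$-coordinate equal to $1$ if $T\subseteq A$ and $0$ otherwise, and check that the Gram-type matrix of these vectors (possibly twisted by a diagonal change of basis / a suitable polynomial) represents the graph.

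The main obstacle I anticipate is twofold: (i) getting the \emph{exact} dimension $\sum_{i=0}^{s-m}\binom{d}{i}$ rather than $\sum_{i=0}^{m}\binom{d}{i}$, which forces the "complementary" viewpoint — one should think of the condition $|A\cap B|\ge m$ as $|A\setminus B|\le s-m$ and build the polynomial in that quantity, so the degree is $s-m$; and (ii) ensuring the diagonal entries are \emph{nonzero in $\Fset$} and that the construction is genuinely symmetric, which over small finite fields requires choosing the vanishing polynomial carefully (e.g. using that $\binom{s}{\le s-m}$-type coefficients don't all vanish, or perturbing by a field element) and then arguing $\rank_\Fset M \le \#\{T\subseteq[d]:|T|\le s-m\}=\sum_{i=0}^{s-m}\binom{d}{i}$ by exhibiting $M = N N^{T}$ (or $M=N D N^{T}$ with $D$ diagonal invertible) for an explicit $0/1$ matrix $N$ of that many columns. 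Once $N$ is written down and the identity $M_{A,B}=\sum_{|T|\le s-m,\,T\subseteq A,\,T\subseteq B} c_{|T|}$ is verified to equal the desired representing matrix via a binomial-inversion identity (Möbius function on the Boolean lattice, giving $\sum_T c_{|T|}[T\subseteq A\cap B] = f(|A\cap B|)$ for a polynomial $f$ of degree $\le s-m$ vanishing on $\{0,\ldots,m-1\}$ and nonzero at $s$), both the rank bound and the "symmetric matrix" clause follow immediately, and the factorization through $N$ makes symmetry automatic.
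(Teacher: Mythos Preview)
Your proposal has the representation condition reversed, and this is not a cosmetic slip --- it drives the whole confusion about degree. For $M$ to \emph{represent} $\Kneser{d}{s}{m}$ one needs $M_{A,B}=0$ for every pair of distinct \emph{non-adjacent} vertices, that is, whenever $|A\cap B|\ge m$; the entries on adjacent pairs ($|A\cap B|<m$) are unconstrained. So the univariate polynomial you apply to $|A\cap B|$ must vanish on $\{m,m+1,\ldots,s-1\}$ and be nonzero at $s$, not (as you write twice) vanish on $\{0,\ldots,m-1\}$. The correct vanishing set has exactly $s-m$ points, so the degree $s-m$ appears immediately --- there is no need for the ``complementary viewpoint'' or for Möbius inversion on the Boolean lattice. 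The paper simply takes
\[
q(x)=\binom{x-m}{s-m}=\frac{1}{(s-m)!}\prod_{j=m}^{s-1}(x-j),
\]
sets $M_{A,B}=q(|A\cap B|)$, and expands $q(\sum_i x_iy_i)$ over $\{0,1\}^d$ into multilinear monomials of $x$-degree at most $s-m$, giving $\rank_\R(M)\le\sum_{i=0}^{s-m}\binom{d}{i}$.

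Your second worry, about the diagonal surviving in small characteristic, is legitimate for the unnormalized product $\prod_{j=m}^{s-1}(x-j)$ (whose value at $s$ is $(s-m)!$), but the paper's normalization dissolves it: $\binom{x-m}{s-m}$ is integer-valued on integers, so $M$ is an integer matrix with diagonal entries $q(s)=1$. One then reduces $M$ modulo the characteristic of $\Fset$; the rank can only drop (a linear dependence over $\Q$ with coprime integer coefficients remains a dependence mod $p$), and the diagonal remains $1$. This single move handles every finite field uniformly and yields symmetry automatically, since $q(\langle x,y\rangle)$ is symmetric in $x$ and $y$. No separate $M=NN^T$ factorization or case analysis on $\mathrm{char}(\Fset)$ is needed.
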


\begin{remark}\label{remark:symmetric}
Theorem~\ref{thm:minrk_Kneser} guarantees that the bound on the minrank can be achieved by a symmetric matrix.
This will be crucial for one of our applications, namely, for a construction of triangle-free graphs whose complement has low orthogonality dimension over the binary field $\Fset_2$ (see Section~\ref{sec:OD_F_2_triangle}).
We remark, however, that for undirected graphs and for fields of characteristic different from $2$, attaining the bound on the minrank by a symmetric matrix can be achieved easily with a factor of $2$ worse bound on the minrank. Indeed, if a matrix $M$ represents a graph $G$ over a field $\Fset$ of characteristic different from $2$ and satisfies ${\rank}_\Fset(M) = r$ then the matrix $M+M^T$ also represents $G$ and has rank at most $2r$ over $\Fset$.
This argument does not hold over fields of characteristic $2$, since in this case the diagonal entries of $M+M^T$ are all zeros.
\end{remark}

As in the previous section, we start with a simple linear algebra lemma.

\subsection{Linear Algebra Lemma}

\begin{lemma}\label{lemma:rankp_R}
For a graph $G$ on the vertex set $[n]$, let $M \in \Z^{n \times n}$ be an integer matrix such that $M_{i,i}=1$ for every $i \in [n]$, and $M_{i,j} = 0$ for every distinct non-adjacent vertices $i$ and $j$ in $G$. Then, for every finite field $\Fset$, ${\minrank}_{\Fset}(G) \leq \rank_\R(M)$.
\end{lemma}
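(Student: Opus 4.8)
The plan is to reduce the integer matrix $M$ modulo the characteristic of $\Fset$ and to exploit the fact that passing from $\Z$ to a field of positive characteristic can only decrease the rank. First I would let $p$ denote the characteristic of $\Fset$ (a prime, since $\Fset$ is finite) and let $\Fset_p \subseteq \Fset$ be its prime field. Define $M_p$ to be the matrix over $\Fset_p$ obtained by reducing every entry of $M$ modulo $p$. The immediate observation is that $M_p$ still represents $G$ over $\Fset$ in the sense of Definition~\ref{def:minrank}: the diagonal entries of $M$ equal $1$, hence stay nonzero mod $p$, and every entry of $M$ indexed by a pair of distinct non-adjacent vertices equals $0$, hence stays $0$ mod $p$. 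Therefore $\minrank_\Fset(G) \le \rank_\Fset(M_p)$, and since the rank of a matrix does not change under field extension, $\rank_\Fset(M_p) = \rank_{\Fset_p}(M_p)$.

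The remaining step is to show $\rank_{\Fset_p}(M_p) \le \rank_\R(M)$, i.e.\ that reduction mod $p$ does not increase the rank of an integer matrix. I would do this via the Smith normal form: there exist invertible integer matrices $P, Q$ with $PMQ = \mathrm{diag}(d_1,\ldots,d_r,0,\ldots,0)$ for positive integers $d_i$, where $r = \rank_\Q(M) = \rank_\R(M)$. Reducing this identity modulo $p$ — the matrices $P$ and $Q$ remain invertible over $\Fset_p$ since their inverses are integer matrices — shows that $\rank_{\Fset_p}(M_p)$ equals the number of indices $i$ with $p \nmid d_i$, which is at most $r$. (An elementary alternative: any $r+1$ columns of $M$ are linearly dependent over $\Q$; after clearing denominators and dividing by the gcd of the coefficients one gets an integer dependence in which not all coefficients are divisible by $p$, which reduces mod $p$ to a nontrivial dependence among the corresponding columns of $M_p$.) Chaining the inequalities gives $\minrank_\Fset(G) \le \rank_{\Fset_p}(M_p) \le \rank_\R(M)$.

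I do not expect a genuine obstacle here; the argument is short. The only point requiring a little care — and the one I would state explicitly — is the monotonicity of rank under reduction modulo $p$, for which the Smith normal form (or the gcd argument above) provides the standard justification. Everything else is bookkeeping: checking that $M_p$ represents $G$ and that rank is preserved under field extension.
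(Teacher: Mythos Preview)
Your proposal is correct and matches the paper's own proof almost exactly: the paper also reduces $M$ modulo the characteristic $p$, observes that the reduced matrix still represents $G$, and invokes the fact that reduction modulo $p$ cannot increase the rank of an integer matrix (proved there via precisely your ``elementary alternative'' gcd argument). The only cosmetic difference is that the paper phrases the field-extension step as $\minrank_\Fset(G) \le \minrank_{\Fset_p}(G)$ rather than $\rank_\Fset(M_p) = \rank_{\Fset_p}(M_p)$, which is equivalent for the purpose at hand.
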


We need the following fact.

\begin{fact}\label{fact:rankp_R}
Let $p$ be a prime and let $M$ be an integer matrix. Then, the matrix $M' = M~(\mod~p)$ satisfies ${\rank}_{\Fset_p}(M') \leq {\rank}_\R(M)$.
\end{fact}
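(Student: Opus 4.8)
The plan is to use the classical, field-independent characterization of matrix rank in terms of minors: over any field, the rank of a matrix is the largest integer $r$ for which some $r \times r$ submatrix has nonzero determinant. First I would set $r = \rank_\R(M)$. Applying this characterization over $\R$, every $(r+1) \times (r+1)$ submatrix of $M$ has determinant equal to $0$; and since $M$ has integer entries, each such determinant is the integer $0$.

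Next I would observe that reduction modulo $p$, namely the map $\Z \to \Fset_p$ sending $a \mapsto a \bmod p$, is a ring homomorphism, and that the determinant of an $(r+1)\times(r+1)$ matrix is a polynomial in its entries with integer coefficients. Hence the determinant of any $(r+1)\times(r+1)$ submatrix of $M' = M \ (\mod\ p)$ is exactly the reduction modulo $p$ of the determinant of the corresponding submatrix of $M$. Since the latter is $0$, it follows that every $(r+1)\times(r+1)$ submatrix of $M'$ has determinant equal to $0$ in $\Fset_p$.

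Finally, applying the minor characterization of rank once more, this time over the field $\Fset_p$, the vanishing of all $(r+1)\times(r+1)$ minors of $M'$ forces $\rank_{\Fset_p}(M') \leq r = \rank_\R(M)$, which is the claimed inequality.

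This statement is routine and I do not expect a genuine obstacle; the only points requiring a moment of care are that the "largest non-vanishing minor" description of rank is valid over every field (so it may be applied both over $\R$ and over $\Fset_p$), and that passing from $M$ to $M \ (\mod\ p)$ commutes with taking determinants of submatrices because reduction modulo $p$ is a ring homomorphism.
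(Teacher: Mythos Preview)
Your argument is correct, but it follows a different route from the paper. The paper argues directly via linear dependence: if rows $v_1,\ldots,v_k$ of $M$ are linearly dependent over $\R$, then (since the $v_i$ are integer vectors) one can clear denominators and divide out by common factors to obtain an integer relation $\sum a_i v_i = 0$ with $\gcd(a_1,\ldots,a_k)=1$; the coefficients are then not all zero modulo $p$, yielding a nontrivial dependence over $\Fset_p$. Your approach via vanishing of all $(r+1)\times(r+1)$ minors sidesteps the denominator-clearing and gcd step entirely, at the cost of invoking the minor characterization of rank over an arbitrary field. Both are standard one-line arguments; yours is perhaps cleaner here, while the paper's stays closer to the definition of rank as the maximal number of independent rows.
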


\begin{proof}
It suffices to show that if some rows $v_1,\ldots, v_k$ of $M$ are linearly dependent over $\R$ then, considered modulo $p$, they are also linearly dependent over $\Fset_p$.
To see this, assume that there exist $a_1,\ldots,a_k \in \R$, at least one of which is nonzero, for which $\sum_{i=1}^{k}{a_i v_i}=0$. Since the $v_i$'s are integer vectors it can be assumed that $a_1,\ldots,a_k \in \Z$ and that $\gcd(a_1,\ldots,a_k)=1$. This implies that they are not all zeros modulo $p$. Therefore, the same coefficients, considered modulo $p$, provide a non-trivial combination of the corresponding rows of $M'$ with sum zero, and we are done.
\end{proof}

\begin{proof}[ of Lemma~\ref{lemma:rankp_R}]
Let $\Fset$ be a finite field and denote its characteristic by $p$.
For a graph $G$ and an integer matrix $M$ as in the statement of the lemma, consider the matrix $M' = M~(\mod~p)$.
Observe that $M'$ represents $G$ over $\Fset_p$, hence by Fact~\ref{fact:rankp_R},
${\minrank}_{\Fset_p}(G) \leq \rank_{\Fset_p}(M') \leq \rank_\R(M)$.
Since $\Fset_p$ is a subfield of $\Fset$, it holds that ${\minrank}_{\Fset}(G) \leq {\minrank}_{\Fset_p}(G)$, and we are done.
\end{proof}

\subsection{Proof of Theorem~\ref{thm:minrk_Kneser}}

We are ready to prove Theorem~\ref{thm:minrk_Kneser}.

\begin{proof}[ of Theorem~\ref{thm:minrk_Kneser}]
Consider the polynomial $q \in \R[x]$ defined by
\[q(x) = {\binom {x-m} {s-m}} = \frac{1}{(s-m)!} \cdot (x-m)(x-(m+1)) \cdots (x-(s-1)).\]
Notice that $q$ is an integer-valued polynomial of degree $s-m$.
Let $f: \{0,1\}^d \times \{0,1\}^d \rightarrow \R$ be the function defined by
\[ f(x,y) = q  \Big ( \sum_{i=1}^{d}{x_i y_i}\Big )\]
for every $x,y \in \{0,1\}^d$.
Expanding $f$ as a linear combination of monomials, the relation $z^2 = z$ for $z \in \{0,1\}$ implies that one can reduce to $1$ the exponent of each variable occuring in a monomial. It follows that $f$ can be represented as a multilinear polynomial in the $2d$ variables of $x$ and $y$. By combining terms involving the same monomial in the variables of $x$, one can write $f$ as
\[ f(x,y) = \sum_{i=1}^{R}{g_i(x) h_i(y)} \]
for an integer $R$ and functions $g_i, h_i : \{0,1\}^d \rightarrow \R$, $i \in [R]$, such that the $g_i$'s are distinct multilinear monomials of total degree at most $s-m$ in $d$ variables. It follows that $R \leq \sum_{i=0}^{s-m}{d \choose i}$.

Now, let $M_1$ and $M_2$ be the $2^d \times R$ matrices whose rows are indexed by $\{0,1\}^d$ and whose columns are indexed by $[R]$, defined by $(M_1)_{x,i} = g_i(x)$ and $(M_2)_{x,i} = h_i(x)$. Then, the rank over $\R$ of the matrix $M = M_1 \cdot M_2^T$ is at most $R$ and for every $x,y \in\{0,1\}^d$ it holds that $M_{x,y} = f(x,y)$. By the definition of $f$ the matrix $M$ is symmetric, and since $q$ is an integer-valued polynomial, all of its entries are integer.

Finally, let $V$ be the vertex set of $\Kneser{d}{s}{m}$, that is, the collection of all $s$-subsets of $[d]$, and identify every vertex $A \in V$ with an indicator vector $c_A \in \{0,1\}^d$ in the natural way.
Observe that for every $A,B \in V$ we have
\[M_{c_A, c_B} = f(c_A,c_B) = q(|A \cap B|).\]
Hence, for every $A \in V$ we have $|A|=s$ and thus $M_{c_A,c_A} =q(s) = 1$, whereas for every distinct non-adjacent $A,B \in V$ we have $m \leq |A \cap B|\leq s-1$ and thus $M_{c_A,c_B} = q(|A \cap B|) =0$.
Since the restriction of $M$ to $V \times V$ is symmetric and has rank at most $R$ over the reals, Lemma~\ref{lemma:rankp_R} implies that ${\minrank}_{\Fset}(\Kneser{d}{s}{m}) \leq R$ for every finite field $\Fset$ and that the bound can be achieved by a symmetric matrix, as desired.
\end{proof}

\subsection{Applications}

We gather below several applications of Theorem~\ref{thm:minrk_Kneser}.

\subsubsection{The Odd Alternating Cycle Conjecture over Finite Fields}

We turn to disprove Conjecture~\ref{conj:alternating} over every finite field.
We will use the simple fact that generalized Kneser graphs do not contain short odd cycles, as stated below (see, e.g.,~\cite{Denley97,Haviv18free}).

\begin{lemma}\label{lemma:cycle_K}
Let $\ell \geq 3$ be an odd integer.
For every even integer $d$ and an integer $m \leq \frac{d}{2\ell}$, the graph $\Kneser{d}{\frac{d}{2}}{m}$ contains no odd cycle of length at most $\ell$.
\end{lemma}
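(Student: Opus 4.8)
The plan is to show that any odd closed walk in $\Kneser{d}{d/2}{m}$ must have length greater than $\ell$ by tracking how the symmetric differences of consecutive vertices accumulate around the cycle. Suppose toward a contradiction that $A_0, A_1, \ldots, A_{r} = A_0$ is a cycle of odd length $r \le \ell$, so each $A_j$ is a $\frac{d}{2}$-subset of $[d]$ and $|A_j \cap A_{j+1}| < m$ for every $j$ (indices mod $r$). Since $|A_j| = |A_{j+1}| = d/2$, having small intersection forces $A_j$ and $A_{j+1}$ to be \emph{nearly complementary}: writing $\overline{A_{j+1}}$ for the complement in $[d]$, the condition $|A_j \cap A_{j+1}| < m$ is equivalent to $|A_j \cap \overline{A_{j+1}}| > d/2 - m$, i.e.\ $A_j$ and $\overline{A_{j+1}}$ agree on all but fewer than $m$ elements. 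Equivalently, the symmetric difference $A_j \triangle \overline{A_{j+1}}$ has size less than $2m$ (in fact at most $2(m-1)$, using parity of the sizes).

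The key step is then to compose these "near-complementarity" relations around the cycle. Going from $A_0$ to $A_1$ to $A_2$, each step flips to (approximately) the complement, so after two steps we return (approximately) to the original set; concretely, $A_0 \triangle A_2 \subseteq (A_0 \triangle \overline{A_1}) \cup (\overline{A_1} \triangle A_2)$, and each of those two sets has size at most $2(m-1)$, so $|A_0 \triangle A_2| \le 4(m-1)$. Iterating, after the $r$ steps of the cycle we obtain $A_0 = A_r$, but the triangle inequality for symmetric differences combined with the fact that $r$ is \emph{odd} gives a contradiction: along an odd cycle one ends with a net "one complementation left over", so $A_0 \triangle \overline{A_0}$ — which is all of $[d]$ and hence has size $d$ — is contained in a union of $r$ sets each of size at most $2(m-1)$. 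This yields $d \le 2r(m-1) < 2\ell m \le d$, where the last inequality uses the hypothesis $m \le \frac{d}{2\ell}$, a contradiction. Hence no such odd cycle of length at most $\ell$ exists.

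I expect the main bookkeeping obstacle to be getting the parity argument clean: one must carefully formalize the statement that an odd number of "complement-up-to-small-error" steps cannot close up, which is most transparently done by passing to the group $(\Z/2)^d$ — identify each $A_j$ with its indicator vector $v_j \in (\Z/2)^d$, let $\mathbf{1}$ be the all-ones vector, and note the edge condition says $\|v_j + v_{j+1} + \mathbf{1}\| \le 2(m-1)$ in Hamming weight. Summing $v_j + v_{j+1} + \mathbf{1}$ over $j = 0, \ldots, r-1$ telescopes to $r \cdot \mathbf{1} = \mathbf{1}$ (since $r$ is odd and we work mod $2$), while the triangle inequality bounds the weight of the sum by $\sum_j \|v_j + v_{j+1} + \mathbf{1}\| \le 2r(m-1)$. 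Thus $d = \|\mathbf{1}\| \le 2r(m-1) \le 2\ell(m-1) < 2\ell m \le d$, the desired contradiction. The only subtlety is justifying the improved per-edge bound $2(m-1)$ rather than $2m-1$: this follows because $\|v_j + v_{j+1} + \mathbf{1}\| = d - 2|A_j \cap A_{j+1}|$ has the same parity as $d$, hence is even, and being $< 2m$ it is at most $2(m-1)$ — though even the weaker bound $2m$ would suffice here after replacing $m \le \frac{d}{2\ell}$ by a strict inequality, so this point is not essential.
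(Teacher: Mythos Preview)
The paper does not actually prove this lemma; it is stated with a citation to~\cite{Denley97,Haviv18free} as a known fact, so there is no in-paper argument to compare against. Your approach is the standard one and is essentially correct: identify vertices with vectors in $(\Z/2)^d$, observe that adjacency forces $v_j + v_{j+1}$ to be close to $\mathbf{1}$ in Hamming distance, telescope around an odd cycle to get $\mathbf{1}$ as a sum of $r$ low-weight vectors, and invoke subadditivity of Hamming weight to reach a contradiction with $m \le d/(2\ell)$.

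There is one computational slip worth fixing. In your final paragraph you write $\|v_j + v_{j+1} + \mathbf{1}\| = d - 2|A_j \cap A_{j+1}|$, but this is the weight of $v_j + v_{j+1}$ (the indicator of $A_j \triangle A_{j+1}$); after adding $\mathbf{1}$ you complement, so the correct formula is $\|v_j + v_{j+1} + \mathbf{1}\| = 2|A_j \cap A_{j+1}|$. With the formula as you wrote it the quantity would be \emph{large} rather than small, and your parity justification (``same parity as $d$'') would be the wrong reason. With the corrected formula the bound $\|v_j + v_{j+1} + \mathbf{1}\| \le 2(m-1)$ is immediate (it is twice an integer that is at most $m-1$), and the rest of your telescoping argument goes through verbatim: $d = \|\mathbf{1}\| \le 2r(m-1) \le 2\ell(m-1) < 2\ell m \le d$. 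Your first paragraph already had this right in set-theoretic language, so this is purely a transcription error in the $(\Z/2)^d$ reformulation.
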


We prove the following theorem, confirming Theorem~\ref{thm:Intro_Cycles}.

\begin{theorem}\label{thm:Cycles}
For every odd integer $\ell \geq 3$ there exists $\delta = \delta(\ell) >0$ such that for every sufficiently large integer $n$, there exists an $n$-vertex graph $G$ with no odd cycle of length at most $\ell$ such that for every finite field $\Fset$,
\[{\minrank}_{\Fset}(G) \leq n^{1-\delta}.\]
Moreover, the bound on the minrank can be achieved by a symmetric matrix.
\end{theorem}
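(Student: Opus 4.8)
The plan is to instantiate the generalized Kneser graph $\Kneser{d}{d/2}{m}$ with carefully chosen parameters and then bound its minrank via Theorem~\ref{thm:minrk_Kneser} while ensuring the absence of short odd cycles via Lemma~\ref{lemma:cycle_K}. First I would fix the odd integer $\ell \geq 3$ and, given a (large, even) integer $d$, set $m = \lfloor \frac{d}{2\ell} \rfloor$ and $s = d/2$, and let $G = \Kneser{d}{d/2}{m}$ on $n = \binom{d}{d/2}$ vertices. By Lemma~\ref{lemma:cycle_K}, since $m \leq \frac{d}{2\ell}$, the graph $G$ contains no odd cycle of length at most $\ell$, which gives the structural property we need. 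It remains to show the minrank is at most $n^{1-\delta}$ for a constant $\delta = \delta(\ell) > 0$ and for every finite field $\Fset$, with the bound attained by a symmetric matrix.

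By Theorem~\ref{thm:minrk_Kneser}, we have ${\minrank}_\Fset(G) \leq \sum_{i=0}^{s-m} \binom{d}{i}$ with a symmetric matrix, so the task reduces to a purely numerical estimate: showing that $\sum_{i=0}^{d/2-m}\binom{d}{i} \leq n^{1-\delta} = \binom{d}{d/2}^{1-\delta}$ for all large $d$. Since $s - m = d/2 - \lfloor d/(2\ell)\rfloor = d\bigl(\tfrac12 - \tfrac{1}{2\ell}\bigr)(1 + o(1)) = \tfrac{\ell-1}{2\ell}\,d\,(1+o(1))$, the top index of the sum is $\alpha d$ for $\alpha = \frac{\ell-1}{2\ell} < \frac12$. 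The standard tail bound for binomial coefficients gives $\sum_{i=0}^{\alpha d}\binom{d}{i} \leq 2^{H(\alpha)d}$ where $H$ is the binary entropy function, while $\binom{d}{d/2} = 2^{(1+o(1))d}$. Since $\alpha < 1/2$ we have $H(\alpha) < 1 = H(1/2)$, so one can pick $\delta > 0$ with $H(\alpha) < 1 - \delta$ (concretely, $\delta$ any constant smaller than $1 - H\bigl(\tfrac{\ell-1}{2\ell}\bigr)$), and then for all sufficiently large $d$ the desired inequality $\sum_{i=0}^{s-m}\binom{d}{i} \leq n^{1-\delta}$ holds. This handles values of $n$ of the form $\binom{d}{d/2}$; for general large $n$ one takes the largest such $d$ with $\binom{d}{d/2} \leq n$ and pads $G$ with isolated vertices (which changes neither the minrank nor the odd-girth), absorbing the gap between consecutive central binomial coefficients into a slightly smaller constant $\delta$.

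I do not expect a serious obstacle here: the structural part is immediate from Lemma~\ref{lemma:cycle_K} and the algebraic part from Theorem~\ref{thm:minrk_Kneser}, so the only real content is the entropy estimate, which is routine. The one point requiring mild care is the bookkeeping around the floors in $m = \lfloor d/(2\ell)\rfloor$ and around padding to arbitrary $n$, and making sure the final $\delta$ is chosen uniformly over all large $d$; none of this is difficult, but it is where a sloppy argument could go wrong. The symmetry of the witnessing matrix is inherited directly from the "Moreover" clause of Theorem~\ref{thm:minrk_Kneser} (padding with isolated vertices preserves symmetry), so the "Moreover" clause of the present theorem follows with no extra work.
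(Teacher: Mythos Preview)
Your approach is the same as the paper's: instantiate $G=\Kneser{d}{d/2}{m}$ with $m\approx d/(2\ell)$, invoke Lemma~\ref{lemma:cycle_K} for the odd-girth and Theorem~\ref{thm:minrk_Kneser} for the minrank, and compare $2^{H(\alpha)d}$ to $\binom{d}{d/2}$ via the entropy bound. The core of the argument is correct.

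There is, however, a genuine error in the final step. Padding with isolated vertices does \emph{not} leave the minrank unchanged: an isolated vertex $i$ forces $M_{i,j}=M_{j,i}=0$ for every $j\neq i$ while $M_{i,i}\neq 0$, so each isolated vertex contributes $1$ to the rank of any matrix representing the graph. Since the ratio of consecutive central binomial coefficients is about $4$, the number of isolated vertices you add can be $\Theta(n)$, making the minrank of the padded graph $\Theta(n)$ rather than $n^{1-\delta}$; this cannot be ``absorbed into a slightly smaller constant $\delta$.'' The paper handles arbitrary $n$ the other way around: choose the \emph{smallest} $d$ (divisible by $2\ell$, so that $m=d/(2\ell)$ exactly) with $n\leq\binom{d}{d/2}$ and take an $n$-vertex induced subgraph. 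Passing to an induced subgraph can only decrease the minrank (restrict the representing matrix to the chosen rows and columns, which also preserves symmetry) and cannot create odd cycles, so both the bound and the ``Moreover'' clause go through.
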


\begin{proof}
Fix an odd integer $\ell \geq 3$.
For an integer $d$ divisible by $2 \ell$, consider the graph $G = \Kneser{d}{\frac{d}{2}}{m}$ where $m = \frac{d}{2 \ell}$.
By Lemma~\ref{lemma:cycle_K}, $G$ contains no odd cycle of length at most $\ell$.
As for the minrank parameter, Theorem~\ref{thm:minrk_Kneser} implies that for every finite field $\Fset$,
\[{\minrank}_{\Fset}(G) \leq \sum_{i=0}^{d/2-m}{d \choose i} \leq 2^{H(\frac{1}{2}-\frac{m}{d}) \cdot d} = 2^{H(\frac{1}{2}-\frac{1}{2\ell}) \cdot d},\]
where $H$ stands for the binary entropy function.
Since $G$ has $|V| = {d \choose {d/2}} = 2^{(1-o(1)) \cdot d}$ vertices, for any $\delta>0$ such that $H(\frac{1}{2}-\frac{1}{2\ell}) < 1-\delta$ we have ${\minrank}_{\Fset}(G) \leq |V|^{1-\delta}$ for every sufficiently large integer $d$.
The proof is completed by considering, for every sufficiently large integer $n$, some $n$-vertex subgraph of the graph defined above, where $d$ is the smallest integer divisible by $2\ell$ such that $n \leq {d \choose {d/2}}$.
\end{proof}

\subsubsection{Triangle-free Graphs and the Orthogonality Dimension over the Binary Field}\label{sec:OD_F_2_triangle}

We turn to prove Theorem~\ref{thm:IntroTriangleFree}.
Its proof adopts the following special case of a result due to Lempel~\cite{Lempel75}.
\begin{lemma}[\cite{Lempel75}]\label{lem:lempel}
Let $M$ by an $n$ by $n$ symmetric matrix over the binary field $\Fset_2$ with at least one nonzero diagonal entry and rank $r$. Then, there exists an $n$ by $r$ matrix $B$ over $\Fset_2$ satisfying $M = B \cdot B^T$.
\end{lemma}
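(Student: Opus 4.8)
The plan is to deduce the lemma from a purely structural fact: \emph{every} symmetric matrix $M$ over $\Fset_2$ that has at least one nonzero diagonal entry and rank $r$ is \emph{congruent} to $I_r \oplus 0_{n-r}$, i.e. $M = P\,(I_r \oplus 0_{n-r})\,P^{T}$ for some invertible $P \in \Fset_2^{n \times n}$. Granting this, the lemma is immediate: writing $C = \left(\begin{smallmatrix} I_r \\ 0 \end{smallmatrix}\right) \in \Fset_2^{n \times r}$ one has $I_r \oplus 0_{n-r} = C C^{T}$, hence $M = (PC)(PC)^{T}$ and $B := PC \in \Fset_2^{n \times r}$ is the desired matrix. (The rank of $I_r\oplus 0_{n-r}$ is $r$, matching the hypothesis on $\rank(M)$, so no inconsistency arises.)

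I would prove the structural fact by induction on $n$. The base case $n=1$ is trivial, as the only such $M$ is $(1)$. For the inductive step, after simultaneously permuting rows and columns assume $M_{1,1}=1$; the standard completing-the-square (Schur complement) congruence then yields $M \cong \langle 1 \rangle \oplus N$ for a symmetric $N \in \Fset_2^{(n-1)\times(n-1)}$ with $\rank(M) = 1 + \rank(N)$. If $N$ has a nonzero diagonal entry, the induction hypothesis applied to $N$ gives $N \cong I_{r-1}\oplus 0_{n-r}$, hence $M \cong I_r \oplus 0_{n-r}$ and we are done.

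It remains to treat the case where $N$ is \emph{alternating} (all diagonal entries zero). Here I invoke the classical normal form of alternating bilinear forms over an arbitrary field, which gives $N \cong H^{\oplus k} \oplus 0$ with $H = \left(\begin{smallmatrix}0 & 1\\ 1 & 0\end{smallmatrix}\right)$ and $2k = \rank(N) = r - 1$. If $k = 0$ then $M \cong \langle 1 \rangle \oplus 0 = I_r \oplus 0_{n-r}$ (as $r=1$). If $k \geq 1$, the crucial step is the characteristic-$2$ \emph{absorption identity}
\[
\langle 1 \rangle \oplus H \;\cong\; I_3 \qquad \text{over } \Fset_2,
\]
which one verifies directly: the matrix $Q$ with rows $(1,1,1),(1,1,0),(1,0,1)$ is invertible over $\Fset_2$ and satisfies $QQ^{T} = \langle 1\rangle \oplus H$. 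Applying it once, $M \cong \langle 1\rangle \oplus \langle 1\rangle \oplus \big(\langle 1 \rangle \oplus H^{\oplus(k-1)} \oplus 0\big)$; the matrix in parentheses is symmetric of size $n-2$, has a nonzero diagonal entry, and has rank $r-2$, so by the induction hypothesis it is congruent to $I_{r-2}\oplus 0_{n-r}$. Therefore $M \cong I_2 \oplus I_{r-2}\oplus 0_{n-r} = I_r \oplus 0_{n-r}$, completing the induction.

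The main obstacle is precisely the absorption identity $\langle 1 \rangle \oplus H \cong I_3$: this is where the hypothesis of a nonzero diagonal entry is used essentially. It is genuinely needed — the hyperbolic plane $H$ alone has rank $2$ but admits no factorization $H = BB^{T}$ with $B \in \Fset_2^{2\times 2}$ (the only vectors in $\Fset_2^2$ with self-inner-product $0$ are $(0,0)$ and $(1,1)$, which are mutually orthogonal), so any correct argument must exploit the diagonal-$1$ assumption, and the natural place to do so is in eliminating the alternating part. The remaining ingredients — permutation congruences, the Schur-complement reduction, and the symplectic normal form for alternating forms — are all standard linear algebra over $\Fset_2$.
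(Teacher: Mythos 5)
Your proof is correct. Note, however, that the paper does not prove this lemma at all: it is quoted as a special case of Lempel's factorization theorem and used as a black box, so there is no internal argument to compare against, and your self-contained derivation is a legitimate substitute. What you prove is in fact the standard classification statement that a symmetric bilinear form over $\Fset_2$ which is non-alternating (equivalently, has a nonzero diagonal entry) and has rank $r$ is congruent to $I_r \oplus 0_{n-r}$; the lemma follows by writing $M = P(I_r\oplus 0_{n-r})P^T = (PC)(PC)^T$ with $C$ the first $r$ columns of $I_n$, exactly as you say. The individual steps check out: the permutation-plus-Schur-complement reduction to $\langle 1\rangle \oplus N$ with $\rank(M)=1+\rank(N)$ is valid over $\Fset_2$; the symplectic normal form $N \cong H^{\oplus k}\oplus 0$ for alternating $N$ is standard over any field; and your absorption identity is verified by direct computation, since $Q$ with rows $(1,1,1),(1,1,0),(1,0,1)$ is invertible over $\Fset_2$ and $QQ^T = \langle 1\rangle\oplus H$. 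Your closing remark that $H$ itself admits no factorization $BB^T$ with $B\in\Fset_2^{2\times 2}$ correctly pinpoints why the nonzero-diagonal hypothesis cannot be dropped when the number of columns is required to equal the rank.

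Two cosmetic points you should fix when writing this up. First, the induction must be phrased as strong induction (or induction on $n$ with the statement asserted for all smaller sizes), since you invoke the hypothesis both at size $n-1$ (when the Schur complement $N$ has a nonzero diagonal entry) and at size $n-2$ (after absorbing one hyperbolic plane); as you note, the size-$(n-2)$ matrix $\langle 1\rangle \oplus H^{\oplus(k-1)}\oplus 0$ does satisfy the hypotheses, having rank $r-2$ and a nonzero diagonal entry, so the recursion is sound. Second, you should state explicitly that "having a nonzero diagonal entry" is preserved in the form needed for the reduction only because the property you really track is non-alternation of the form, i.e.\ $x^TMx\not\equiv 0$, which over $\Fset_2$ coincides with having a nonzero diagonal entry and is a congruence invariant; this makes the case split on $N$ (nonzero diagonal entry versus alternating) exhaustive and well defined.
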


\begin{proof}[ of Theorem~\ref{thm:IntroTriangleFree}]
Apply Theorem~\ref{thm:Cycles} with $\ell = 3$ to obtain some $\delta >0$ such that for every sufficiently large integer $n$, there exist a triangle-free $n$-vertex graph $G$ and an $n$ by $n$ symmetric matrix $M$ over $\Fset_2$ of rank $r = {\rank}_{\Fset_2}(M) \leq n^{1-\delta}$ that represents $G$.
By Lemma~\ref{lem:lempel}, there exists an $n$ by $r$ matrix $B$ over $\Fset_2$ satisfying $M = B \cdot B^T$.
By assigning the $i$th row of $B$ to the $i$th vertex of $G$ we get an $r$-dimensional orthogonal representation of $\overline{G}$ over $\Fset_2$, hence $\od(\overline{G},\Fset_2) \leq r \leq n^{1-\delta}$.
\end{proof}

\subsubsection{The Vector Chromatic Number vs. Minrank}

The vector chromatic number of graphs, introduced by Karger, Motwani, and Sudan in~\cite{KargerMS98}, is defined as follows.

\begin{definition}[Vector Chromatic Number]\label{def:chi_v}
For a graph $G=(V,E)$ the {\em vector chromatic number} of $G$,
denoted by $\vchrom(G)$, is the minimal real value of $\kappa > 1$ such
that there exists an assignment of a unit vector $w_v$ to every
vertex $v \in V$ satisfying the inequality $\langle w_{v}, w_{v'} \rangle \leq
-\frac{1}{\kappa -1}$ whenever $v$ and $v'$ are adjacent in $G$.
\end{definition}

To prove Theorem~\ref{thm:Intro_chi_v}, we need the following simple fact that relates the minrank of a graph to the minrank of its complement (see, e.g.,~\cite[Remark~2.2]{Peeters96}).

\begin{fact}\label{fact:minrank_comp}
For every field $\Fset$ and an $n$-vertex graph $G$, $\minrank_\Fset(G) \cdot \minrank_\Fset(\overline{G}) \geq n$.
\end{fact}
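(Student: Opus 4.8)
The goal is Fact~\ref{fact:minrank_comp}: for every field $\Fset$ and every $n$-vertex graph $G$, $\minrank_\Fset(G) \cdot \minrank_\Fset(\overline{G}) \geq n$. My plan is to take optimal representing matrices $M$ for $G$ and $N$ for $\overline{G}$, form a suitable linear-algebraic object from them whose rank is at most the product of the two minranks, and show this object is nonsingular (rank exactly $n$), forcing $n \leq \rank_\Fset(M) \cdot \rank_\Fset(N)$.

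First I would recall the setup. By definition there is an $n$ by $n$ matrix $M$ over $\Fset$ representing $G$ with $\rank_\Fset(M) = \minrank_\Fset(G)$, meaning $M_{i,i} \neq 0$ for all $i$ and $M_{i,j} = 0$ whenever $i \neq j$ and $\{i,j\} \notin E(G)$. Similarly there is a matrix $N$ representing $\overline{G}$ with $\rank_\Fset(N) = \minrank_\Fset(\overline{G})$, so $N_{i,i} \neq 0$ for all $i$ and $N_{i,j} = 0$ whenever $i \neq j$ and $\{i,j\} \in E(G)$. The key observation: for $i \neq j$, the pair $\{i,j\}$ is an edge of exactly one of $G, \overline{G}$, so at least one of $M_{i,j}, N_{i,j}$ vanishes; consequently the Hadamard (entrywise) product $M \circ N$ satisfies $(M\circ N)_{i,j} = 0$ for all $i \neq j$ and $(M\circ N)_{i,i} = M_{i,i} N_{i,i} \neq 0$. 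Thus $M \circ N$ is a diagonal matrix with all diagonal entries nonzero, hence $\rank_\Fset(M \circ N) = n$.

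The second ingredient is the standard fact that the Hadamard product has rank bounded by the product of the ranks: $\rank_\Fset(M \circ N) \leq \rank_\Fset(M) \cdot \rank_\Fset(N)$. This follows because if $M = \sum_{a=1}^{r} x_a y_a^T$ and $N = \sum_{b=1}^{r'} u_b v_b^T$ are rank-one decompositions with $r = \rank_\Fset(M)$, $r' = \rank_\Fset(N)$, then $M \circ N = \sum_{a,b} (x_a \circ u_b)(y_a \circ v_b)^T$, a sum of $rr'$ rank-one matrices. Combining the two ingredients gives $n = \rank_\Fset(M \circ N) \leq \rank_\Fset(M) \cdot \rank_\Fset(N) = \minrank_\Fset(G) \cdot \minrank_\Fset(\overline{G})$, which is exactly the claim.

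There is essentially no serious obstacle here; the argument is short and self-contained. The only point requiring a (trivial) moment of care is the dichotomy "$\{i,j\}$ is a non-edge of $G$ or a non-edge of $\overline{G}$" — one should note that in the undirected reading of Definition~\ref{def:minrank} each unordered pair is either an edge or a non-edge of $G$, and $\overline{G}$ swaps the two roles, so for any distinct $i,j$ either the $(i,j)$ entry of $M$ is forced to be $0$ (when $\{i,j\}\notin E(G)$) or the $(i,j)$ entry of $N$ is forced to be $0$ (when $\{i,j\}\in E(G)$). With that settled the rest is the Hadamard-product rank inequality and the observation that a diagonal matrix with nonzero diagonal has full rank.
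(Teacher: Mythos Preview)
Your argument is correct: the Hadamard product $M \circ N$ of optimal representing matrices for $G$ and $\overline{G}$ is a diagonal matrix with nonzero diagonal, hence has rank $n$, while $\rank_\Fset(M \circ N) \leq \rank_\Fset(M)\cdot\rank_\Fset(N)$ gives the bound. The paper does not actually supply its own proof of this fact---it merely states it with a pointer to~\cite[Remark~2.2]{Peeters96}---and the argument you wrote is precisely the standard one that reference alludes to.
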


\begin{proof}[ of Theorem~\ref{thm:Intro_chi_v}]
For an integer $d$ divisible by $8$, consider the graph $G = \Kneser{d}{\frac{d}{2}}{m}$ where $m = \frac{d}{8}$.
We first claim that $\vchrom(G) \leq 3$.
To see this, assign to every vertex $A$ of $G$, representing a $\frac{d}{2}$-subset of $[d]$, the unit vector $w_A \in \R^d$ defined by $(w_A)_i = \frac{1}{\sqrt{d}}$ if $i \in A$ and $(w_A)_i = - \frac{1}{\sqrt{d}}$ otherwise.
Observe that every two distinct vertices $A$ and $B$ that are adjacent in $G$ satisfy $|A \cap B| < \frac{d}{8}$ and thus $|A \bigtriangleup B| > \frac{3d}{4}$, implying that $\langle w_A, w_B \rangle = \frac{d-2 \cdot |A \bigtriangleup B|}{d} < -\frac{1}{2}$. This implies that $\vchrom(G) \leq 3$, as claimed.
As for the minrank parameter, Theorem~\ref{thm:minrk_Kneser} implies that for every finite field $\Fset$,
\[{\minrank}_{\Fset}(G) \leq \sum_{i=0}^{d/2-m}{d \choose i} \leq 2^{H(\frac{1}{2}-\frac{m}{d}) \cdot d} = 2^{H(3/8) \cdot d},\]
where $H$ stands for the binary entropy function.
Since $G$ has $n = {d \choose {d/2}} = 2^{(1-o(1)) \cdot d}$ vertices, for any $\delta < 1- H(3/8)$ we have ${\minrank}_{\Fset}(G) \leq n^{1-\delta}$ assuming that $d$ is sufficiently large. By Fact~\ref{fact:minrank_comp}, this implies that
${\minrank}_{\Fset}(\overline{G}) \geq n^{\delta}$, and we are done.
\end{proof}


\bibliographystyle{abbrv}
\bibliography{xi_kneser}

\end{document}